\newtheorem{thm}{Theorem}[section]
\newtheorem{lem}{Lemma}[section]
\newtheorem{cor}{Corollary}[section]
\newtheorem{conj}{Conjecture}[section]
\newcommand{\aaa}{\mathbf{a}}
\newcommand{\bbb}{\mathbf{b}}
\newcommand{\cc}{\mathbf{c}}
\newcommand{\x}{\mathbf{x}}
\newcommand{\z}{\mathbf{z}}
\newcommand{\f}{\mathbf{f}}
\newcommand{\g}{\mathbf{g}}
\newcommand{\h}{\mathbf{h}}
\newcommand{\bl}{\mathbf{l}}
\newcommand{\s}{\mathbf{s}}
\newcommand{\vv}{\mathbf{v}}
\newcommand{\w}{\mathbf{w}}
\newcommand{\A}{\mathbf{a}}
\newcommand{\D}{\mathbf{D}}
\newcommand{\GG}{\mathbf{G}}
\newcommand{\HH}{\mathbf{H}}
\newcommand{\SSSS}{\mathbf{S}}
\newcommand{\Y}{\mathbf{Y}}
\newcommand{\Z}{\mathbf{Z}}
\newcommand{\LL}{\mathbf{L}}
\newcommand{\W}{\mathbf{W}}
\newcommand{\X}{\mathbf{X}}
\newcommand{\rank}{ \textrm{rank} }
\newcommand{\for}{\quad \textrm{for} \quad}
\newcommand{\trace}{ \textrm{trace} }
\newcommand{\edit}{}
\begin{document}


\title{STFT Phase Retrieval: Uniqueness Guarantees and Recovery Algorithms}
\author{
\begin{tabular}[t]{c@{\extracolsep{5em}}c@{\extracolsep{5em}}c} 
Kishore Jaganathan$^\dagger$ & Yonina C. Eldar$^\ddagger$  & Babak Hassibi$^\dagger$
\end{tabular}
\\
$^\dagger$Department of Electrical Engineering, Caltech \\
$^\ddagger$Department of Electrical Engineering, Technion, Israel Institute of Technology
\thanks{Copyright (c) 2014 IEEE. Personal use of this material is permitted. However, permission to use this material for any other purposes must be obtained from the IEEE by sending a request to pubs-permissions@ieee.org}  
\thanks{K. Jaganathan and B. Hassibi were supported in part by the National Science Foundation under grants CCF-0729203, CNS-0932428 and CIF-1018927, by the Office of Naval Research under the MURI grant N00014-08-1-0747, and by a grant from Qualcomm Inc.}
\thanks{Y. C. Eldar was supported in part by the European Union's Horizon 2020 Research and Innovation Program through the ERC-BNYQ Project, and in part by the Israel Science Foundation under Grant 335/14.}   
}

\maketitle

\begin{abstract}
The problem of recovering a signal from its Fourier magnitude is of paramount importance in various fields of engineering and applied physics. Due to the absence of Fourier phase information, some form of additional information is required in order to be able to uniquely, efficiently and robustly identify the underlying signal. Inspired by practical methods in optical imaging, we consider the problem of signal reconstruction from the Short-Time Fourier Transform (STFT) magnitude. We first develop conditions under which the STFT magnitude is an almost surely unique signal representation. We then consider a semidefinite relaxation-based algorithm (STliFT) and provide recovery guarantees. Numerical simulations complement our theoretical analysis and provide directions for future work.
\end{abstract}

\begin{keywords}
Short-Time Fourier Transform (STFT), Phase Retrieval, Super-Resolution, Semidefinite Relaxation.
\end{keywords}


\begin{figure*}
\begin{center}
\includegraphics[scale=0.45]{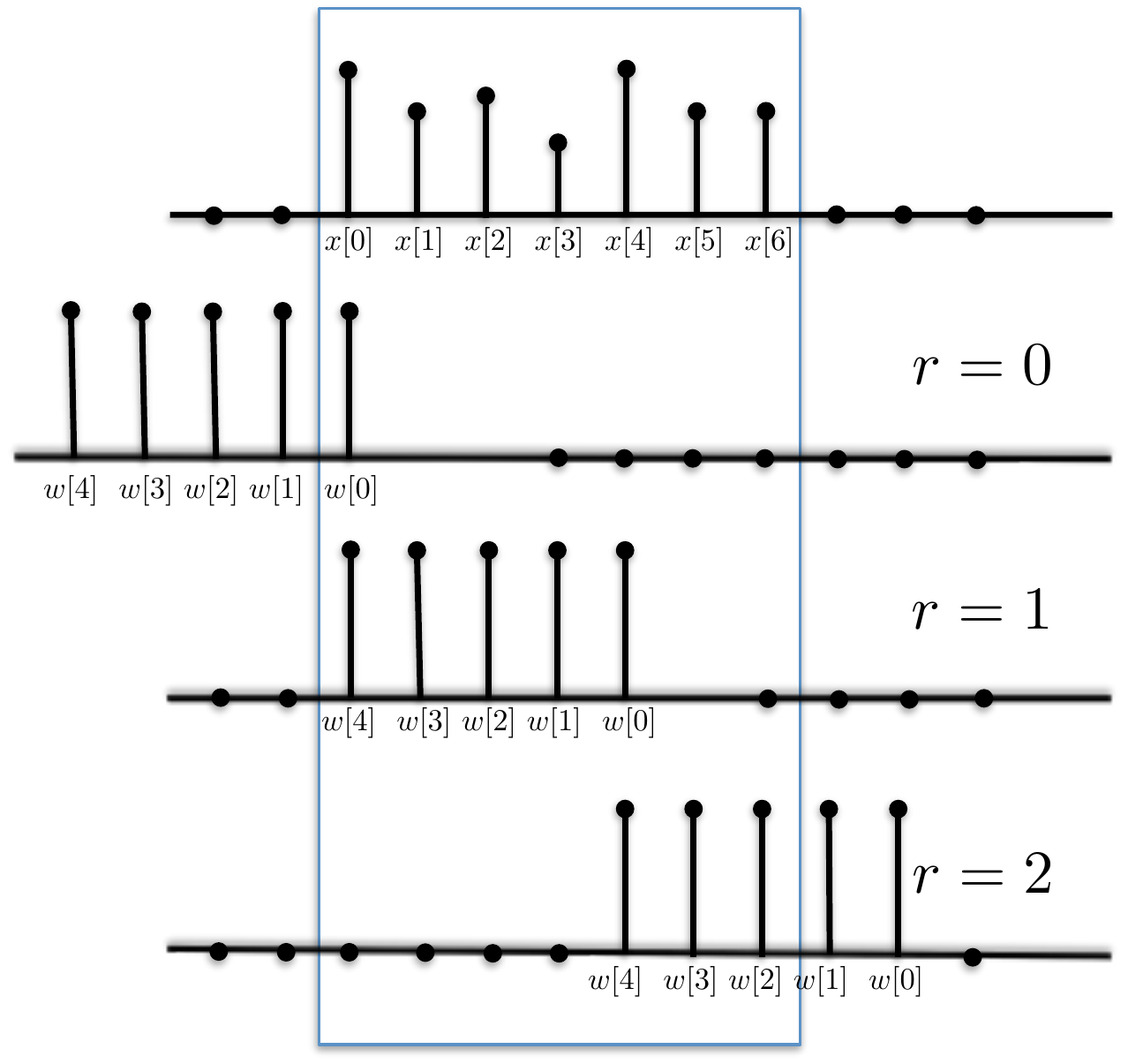} \hspace{1cm} \includegraphics[scale=0.45]{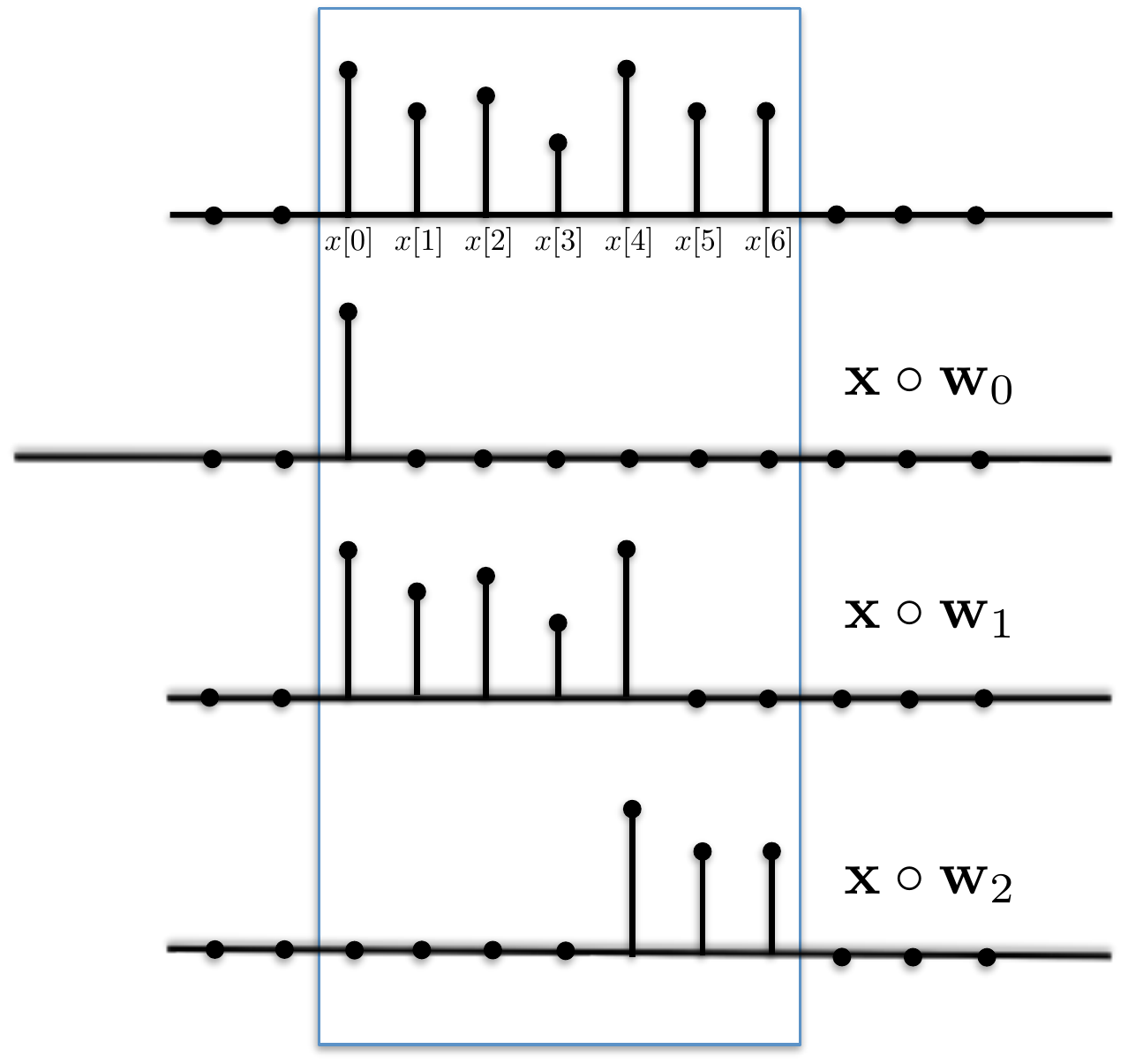}
\end{center}
\caption{Sliding window interpretation of the STFT for $N=7$, $W = 5$ and $L=4$. The shifted window overlaps with the signal for $3$ shifts, and hence $R=3$ short-time sections are considered.}
\label{fig:STFTexample}
\end{figure*}

\section{Introduction} 

In many physical measurement systems, the measurable quantity is the magnitude-square of the Fourier transform of the underlying signal. The problem of reconstructing a signal from its Fourier magnitude is known as phase retrieval \cite{patt1, patt2}. This reconstruction problem is one with a rich history and occurs in many areas of engineering and applied physics such as optics \cite{walther}, X-ray crystallography \cite{millane}, astronomical imaging \cite{dainty}, speech recognition \cite{rabiner}, computational biology \cite{stef}, blind channel estimation \cite{baykal} and more. We refer the readers to \cite{gerchberg, fienup, bauschke} for a comprehensive survey of classical approaches. Recent reviews can be found in \cite{eldarmagazine, kishorebc}.

It is well known that phase retrieval is an ill-posed problem \cite{hofstetter}. In order to be able to uniquely identify the underlying signal, various methods have been explored, which can be broadly classified into two categories: (i) {\em Additional prior information}: common approaches include bounds on the support of the signal \cite{gerchberg, fienup, bauschke} and sparsity constraints \cite{vetterli, mukherjee, candespr, eldar, eldar0, eldargespar, kishoreorig, kishorej}. (ii) {\em Additional magnitude-only measurements}: popular examples include the use of structured illuminations and masks \cite{liu, pfeiffer, popov, candespr, mahdi, gross, kishoreprm}, and Short-Time Fourier Transform (STFT) magnitude measurements \cite{trebino, yang, oppenheim, rodenburg, humphry, nawab, eldarstft, kishorestft, lim}.

We consider STFT phase retrieval, which is the problem of reconstructing a signal from its STFT magnitude. In some applications of phase retrieval, it is easy to obtain such measurements. One example is Frequency Resolved Optical Gating (FROG), which is a general method for measuring ultrashort laser pulses \cite{trebino}. Fourier ptychography \cite{humphry, rodenburg, yang}, a technology which has enabled X-ray, optical and electron microscopy with increased spatial resolution without the need for advanced lenses, is another popular example. In applications such as speech processing, it is natural to work with the STFT instead of the Fourier transform as the spectral content of speech changes over time \cite{oppenheim}. The key idea, when using STFT measurements, is to introduce redundancy in the magnitude-only measurements by maintaining a substantial overlap between adjacent short-time sections. This mitigates the uniqueness and algorithmic issues of phase retrieval. 

In this work, our contribution is two-fold:

{\em (i) Uniqueness guarantees}: Researchers have previously developed conditions under which the STFT magnitude uniquely identifies signals (up to a global phase). However, either prior information on the signal is assumed in order to provide the guarantees, or the guarantees are limited. For instance, the results provided in \cite{nawab} require exact knowledge of a small portion of the underlying signal. In \cite{eldarstft}, the guarantees developed are for the setup in which adjacent short-time sections differ in only one index. These limitations are primarily due to a small number of adversarial signals which cannot be uniquely identified from their STFT magnitude. Here, in contrast, we develop conditions under which the STFT magnitude is an {\em almost surely} unique signal representation. In particular, we show that, with the exception of a set of signals of measure zero, non-vanishing signals can be uniquely identified (up to a global phase) from their STFT magnitude if adjacent short-time sections overlap (Theorem \ref{STFTUN}). We then extend this result to incorporate sparse signals which have a limited number of consecutive zeros (Corollary \ref{STFTUS}). 

{\em (ii) Recovery algorithms:} Researchers have previously developed efficient iterative algorithms based on classic optimization frameworks to solve the STFT phase retrieval problem. {\edit Examples include the Griffin-Lim (GL) algorithm \cite{lim} and STFT-GESPAR for sparse signals \cite{eldarstft}. While these techniques work well in practice, they do not have theoretical guarantees}. In \cite{kishorestft} and \cite{sun}, a semidefinite relaxation-based STFT phase retrieval algorithm, called STliFT (see Algorithm \ref{algo:STliFT} below), was proposed. In this work, we conduct extensive numerical simulations and provide theoretical guarantees for STliFT. {\edit In particular, we conjecture that STliFT can recover most non-vanishing signals (up to a global phase) from their STFT magnitude if adjacent short-time sections differ in at most half the indices (Conjecture \ref{conj}). When this condition is satisfied, we argue that one can super-resolve (i.e., discard high frequency measurements) and reduce the number of measurements to $(4 + {o}(1) )N$, where $N$ is the length of the complex signal. Therefore, STliFT recovers most non-vanishing signals uniquely, efficiently and robustly, using an order-wise optimal number of phaseless measurements.

We prove this conjecture for the setup in which the exact knowledge of a small portion of the underlying signal is available (Theorem \ref{lhalfthm}). For particular choices of STFT parameters, this portion vanishes asymptotically, due to which this setup is asymptotically reasonable. We also prove this conjecture for the case in which adjacent short-time sections differ in only one index (Theorem \ref{l1thm}). We then extend these results to incorporate sparse signals which have a limited number of consecutive zeros (Corollary \ref{lhalfthms}).}

The rest of the paper is organized as follows. In Section 2, we mathematically formulate STFT phase retrieval and establish our notation. We present uniqueness guarantees in Section 3. Section 4 considers the STliFT algorithm and provides recovery guarantees. Numerical simulations are presented in Section 5. Section 6 concludes the paper.

\section{Problem Setup}

Let $\x = ( x[0] , x[1] , \ldots , x[N-1] )^T$ be a signal of length $N$ and $\w = ( w[0] , w[1] , \ldots , w[W-1] )^T$ be a window of length $W$. The STFT of  $\x$ with respect to $\w$, denoted by $\Y_w$, is defined as:
\begin{equation}
Y_w[ m , r ] = \sum_{n = 0}^{N-1} x[ n ] w[ rL - n ] e^{- i 2 \pi \frac{ m n } { N } }
\end{equation}
for $0 \leq m \leq N - 1$ and $0 \leq r \leq R-1$, where the parameter $L$ denotes the separation in time between adjacent short-time sections and the parameter $R=\big\lceil\frac{N+W-1}{L}\big\rceil$ denotes the number of short-time sections considered.


The STFT can be interpreted as follows: Suppose $\w_r$ denotes the signal obtained by shifting the flipped window $\w$ by $rL$ time units (i.e., $w_r[n] = w[rL - n]$) and $\circ$ is the Hadamard (element-wise) product operator. The $r$th column of  $\Y_w$, for $0 \leq r \leq R - 1$, corresponds to the $N$ point DFT of $\x \circ \w_r$. In essence, the window is flipped and slid across the signal (see Figure \ref{fig:STFTexample} for a pictorial representation), and $\Y_w$ corresponds to the Fourier transform of the windowed signal recorded at regular intervals. This interpretation is known as the {\em sliding window} interpretation. 

Let $\Z_w$ be the $N \times R$ measurements corresponding to the magnitude-square of the STFT of $\x$ with respect to $\w$ so that $Z_w[m,r] = \abs{Y_w[m,r]}^2$. Let $\W_r$, for $0 \leq r \leq R-1$, be the $N \times N$ diagonal matrix with diagonal elements $( w_r[0], w_r[1], \ldots , w_r[N-1]) $. STFT phase retrieval can be mathematically stated as: 
\begin{align}
\label{STFTPRf}
&\textrm{find} \hspace{1.9cm} \x  \\ 
\nonumber & \textrm{subject to} \hspace{1cm} Z_w[m,r] = \abs{\left<\f_m,\W_r\x\right>}^2
\end{align}
for $0 \leq m \leq N - 1$ and $0 \leq r \leq R-1$, where $\f_m$ is the conjugate of the $m$th column of the $N$ point DFT matrix and $\left<.,.\right>$ is the inner product operator. {\edit In fact, STFT phase retrieval can be equivalently stated by only considering the measurements corresponding to $0 \leq r \leq R-1$ and $1 \leq m \leq M$, for any parameter $M$ satisfying $2W \leq M \leq N$ (see Section \ref{appD} for details). This equivalence significantly reduces the number of measurements when $W \ll N$, which is typically the case in practical methods. In Section 4, we further reduce the number of measurements per short-time section through super-resolution. In particular, we consider the setup with $2L \leq W \leq \frac{N}{2}$ and $4L \leq M \leq N$.}


We use the following definitions: A signal $\x$ is said to be non-vanishing if $x[n] \neq 0$ for all $0 \leq n \leq N -1$. Similarly, a window $\w$ is said to be non-vanishing if $w[n] \neq 0$ for all $0 \leq n \leq W-1$. Further, a signal $\x$ is said to be sparse if it is not non-vanishing, i.e., $x[n] = 0$ for at least one $0 \leq n \leq N-1$.


\begin{table*}[t]
\small
\begin{center}
\begin{tabular}{|>{\centering\arraybackslash}m{20em}|p{30em}|}
\hline
\multirow{3}{*}{} & {Uniqueness if the first $L$ samples are known a priori, $2L \leq W \leq \frac{N}{2}$ and $\w$ is non-vanishing \cite{nawab}}\\
\cline{2-1}
\parbox{20em}{\centering Non-vanishing signals \\ \{$x[n] \neq 0$ for all $0 \leq n \leq N-1$\}}& {Uniqueness up to a global phase if $L=1$, $2 \leq W \leq \frac{N+1}{2}$, $W-1$ coprime with $N$ and mild conditions on $\w$ \cite{eldarstft}} \\
\cline{2-1}
& {Uniqueness up to a global phase for almost all signals if $L < W \leq \frac{N}{2}$ and $\w$ is non-vanishing [This work]} \\
\hline
\multirow{3}{*}{}& {Uniqueness for signals with at most $W-2L$ consecutive zeros if the first $L$ samples, starting from the first non-zero sample, are known a priori, $2L \leq W \leq \frac{N}{2}$ and $\w$ is non-vanishing \cite{nawab}}\\
\cline{2-1}
\parbox{20em}{\centering Sparse signals \\ \{$x[n] = 0$ for at least one $0 \leq n \leq N-1$\}}&{No uniqueness for most signals with $W$ consecutive zeros \cite{eldarstft}} \\
\cline{2-1}
& {Uniqueness up to a global phase and time-shift for almost all signals with less than $\min\{W-L, L\}$ consecutive zeros if $L < W \leq \frac{N}{2}$ and $\w$ is non-vanishing [This work]} \\
\hline
\end{tabular}
\caption{Uniqueness results for STFT phase retrieval (i.e., $2W \leq M \leq N$).}
\end{center}
\end{table*}

\section{Uniqueness Guarantees} 

In this section, we review existing results regarding the uniqueness of STFT phase retrieval and present our uniqueness guarantees. The results are summarized in Table I.

In STFT phase retrieval, the global phase of the signal cannot be determined due to the fact that signals $\x$ and $e^{i \phi}\x$, for any $\phi$, always have the same STFT magnitude regardless of the choice of $\{\w,L\}$. In contrast, in classic phase retrieval, signals which differ from each other by a global phase, time-shift and/ or conjugate-flip (together called {\em trivial ambiguities}) cannot be distinguished from each other as they have the same Fourier magnitude \cite{vetterli, kishoreorig, eldarmagazine}. 


Observe that $L<W$ is a necessary condition in order to be able to uniquely identify most signals. If $L > W$, then the STFT magnitude does not contain any information from some locations of the signal. When $L=W$,  adjacent short-time sections do not overlap and hence STFT phase retrieval is equivalent to a series of non-overlapping instances of classic phase retrieval. Since there is no way of determining the relative phase, time-shift or conjugate-flip between the windowed signals corresponding to the various short-time sections, most signals cannot be uniquely identified. For example, suppose $\{\w,L\}$ is chosen such that $L=W=2$ and $w[n] = 1$ for all $0 \leq n \leq W-1$. Consider the signal $\x_1 = ( 1 , 2 , 3 )^T$ of length $N=3$. Signals $\x_1$ and $\x_2 = ( 1 , -2 , -3 )^T$ have the same STFT magnitude. In fact, more generally, signals $\x_1$ and $( 1 , e^{{i} \phi}2 , e^{{i} \phi}3 )^T$, for any $\phi$, have the same STFT magnitude.

\subsection{Non-vanishing signals}

For some specific choices of $\{\w,L\}$, it has been shown that all non-vanishing signals can be uniquely identified from their STFT magnitude up to a global phase. In \cite{eldarstft}, it is proven that the STFT magnitude uniquely identifies non-vanishing signals up to a global phase for $L=1$ if the window $\w$ is chosen such that the $N$ point DFT of $(|w[0]|^2, |w[1]|^2, \ldots, |w[N-1]|^2)$ is non-vanishing, $2 \leq W  \leq \frac{N+1}{2}$ and $W-1$ is coprime with $N$. In \cite{nawab}, the authors prove that if the first $L$ samples are known a priori, then the STFT magnitude can uniquely identify non-vanishing signals for any $L$ if the window $\w$ is chosen such that it is non-vanishing and $2L \leq W \leq \frac{N}{2}$. 

In this work, we prove the following result for non-vanishing signals:

\begin{thm}
\label{STFTUN}
Almost all non-vanishing signals can be uniquely identified (up to a global phase) from their STFT magnitude if $\{\w, L, M\}$ satisfy 
\begin{enumerate}[(i)]
\item $\w$ is non-vanishing
\item $L < W \leq \frac{N}{2}$
\item $2W \leq M \leq N$.
\end{enumerate}
\end{thm}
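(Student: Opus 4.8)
The plan is to recast the measurements as linear observations of the rank-one matrix $\x\x^*$, reduce the uniqueness claim to recovery of its banded entries, and then exploit the rank-one structure to obtain identifiability for all but a measure-zero set of signals. The first step is the reduction to autocorrelation data. Since $Z_w[m,r]=\abs{\langle \f_m,\W_r\x\rangle}^2=\f_m^*\W_r\,\x\x^*\,\W_r^*\f_m$ is linear in $\x\x^*$, and since for fixed $r$ the collection $\{Z_w[m,r]\}_m$ is the $N$-point DFT of the circular autocorrelation $a_r[\ell]=\sum_n g_r[n]\overline{g_r[n-\ell]}$ of $g_r=\W_r\x$, I would first invert this DFT. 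Because $g_r$ is supported on $W$ consecutive indices and $W\le \frac{N}{2}$, its autocorrelation occupies lags $\abs{\ell}\le W-1$ with $2W-1\le N$, so there is no aliasing and the $M\ge 2W$ retained frequencies determine every $a_r[\ell]$. Expanding,
\[
a_r[\ell]=\sum_{k=0}^{W-1-\ell}x[rL-k]\,\overline{x[rL-k-\ell]}\;w[k]\,\overline{w[k+\ell]},\qquad 0\le\ell\le W-1,
\]
so the data pin down exactly window-weighted sums of the entries of $\x\x^*$ on its $\ell$-th diagonal, i.e.\ the banded part $\{(\x\x^*)[n,n']:\abs{n-n'}\le W-1\}$.

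Uniqueness up to global phase would then follow from two facts. The easy reduction: once the diagonal and first super-diagonal of $\x\x^*$ are known, non-vanishing of $\x$ lets me read off $\abs{x[n]}$ from $(\x\x^*)[n,n]$ and propagate phases via $x[n]=(\x\x^*)[n,n-1]/\overline{x[n-1]}$, determining $\x$ up to the single free phase of $x[0]$, which is exactly the global-phase ambiguity. The crux is to show that, for almost all non-vanishing $\x$, the data determine the banded part of $\x\x^*$ uniquely. The extreme diagonal is immediate: since $\w$ is non-vanishing, $a_r[W-1]=x[rL]\,\overline{x[rL-W+1]}\,w[0]\overline{w[W-1]}$ with $w[0]\overline{w[W-1]}\ne 0$, so these products are recovered exactly; the remaining diagonals are the difficulty.

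The main obstacle I anticipate is precisely this generic recovery of the full band. For $L>1$ the purely linear map from the $\ell$-th diagonal to $\{a_r[\ell]\}_r$ is under-determined — roughly $N/L$ equations for $N-\ell$ unknowns — so the band cannot be recovered by linear algebra alone, and the rank-one structure must be used. Concretely, I would couple the diagonals through the vanishing $2\times2$ minors $(\x\x^*)[n,n']\,(\x\x^*)[n',n'']=(\x\x^*)[n,n'']\,(\x\x^*)[n',n']$ and combine them with the exactly-known extreme-diagonal products and the overlap $L<W$ to bootstrap the band, seeding the recovery on the first window and propagating across adjacent, overlapping sections in the spirit of \cite{nawab}, but now without assuming the seed is known. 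The goal is to show that any second non-vanishing solution $\x'$ must satisfy $\x'=e^{i\phi}\x$, except when $\x$ lies in the common zero set of the finitely many polynomials in $\mathrm{Re}\,x[n],\mathrm{Im}\,x[n]$ that express degeneracy of the coupled systems arising in the propagation; exhibiting a single non-vanishing signal at which one such polynomial is nonzero (for each admissible $(\w,L,W,M)$) then forces the exceptional set to be a proper algebraic variety, hence of Lebesgue measure zero. I expect the genuine work to lie in two places: (1) proving that the index graph induced by the recoverable products is connected for every $L<W$, so that relative phases reach all samples; and (2) controlling the degeneracy variety uniformly so that it is proper rather than everything. The hypothesis $W\le\frac{N}{2}$ enters throughout, ensuring both the absence of autocorrelation aliasing and a sufficient number of overlapping short-time sections for the propagation to close.
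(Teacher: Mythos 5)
Your opening reduction---inverting the per-section DFT to obtain the short-time autocorrelations, using $2W-1\le N$ and Vandermonde invertibility for $M\ge 2W$---coincides with the paper's first step, and you correctly locate the crux: showing that, for almost all non-vanishing $\x$, no inequivalent signal produces the same banded data. But precisely at that crux your proposal has a genuine gap, and the mechanism you sketch for closing it would not work as stated. The exceptional set is defined by an \emph{existential} condition (``there exists $\x'\not\equiv\x$ with the same short-time autocorrelations''), i.e.\ it is the projection onto the $\x$-coordinates of an incidence set in $(\x,\x')$-space. A projection of a real algebraic set is in general only semi-algebraic; it is not presented as ``the common zero set of finitely many polynomials in $\mathrm{Re}\,x[n],\mathrm{Im}\,x[n]$'' until you eliminate the auxiliary unknown $\x'$, and your proposal never says how to do that elimination. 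Consequently the ``degeneracy polynomials'' you want to evaluate at a witness signal do not yet exist, and the witness device is circular: you cannot certify properness of a variety you have not constructed. Likewise, your fallback of analyzing the coupled bilinear system (diagonal sums plus $2\times 2$ minors) can at best yield local uniqueness or finiteness of the solution set via Jacobian-type arguments; the ambiguities that must be excluded (zero-flips of each windowed section) are \emph{isolated} alternative solutions of exactly that system, so no local or dimension-of-fiber argument alone rules them out.

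The paper supplies exactly the missing ingredient and then does the measure-zero argument differently. Lemma \ref{stftproplem} (via spectral factorization, Lemma 7.1 of \cite{kishorej}) parametrizes \emph{all} pairs with equal STFT magnitude: for every $r$ one may write $\x_1\circ\w_r\equiv\g_r\star\h_r$ and $\x_2\circ\w_r\equiv\g_r\star\tilde{\h}_r$, which eliminates $\x_2$ in favor of explicit factor variables $(\g_r,\h_r)$. The overlap $L<W$ then imposes consistency constraints (the analogues of your cross-section coupling) that pin down at least one complex degree of freedom of these factors, so the set of ambiguous non-vanishing signals is a finite union of images of continuously differentiable maps from $\mathbb{R}^{2N-2}$ into the $(2N-1)$-dimensional manifold $\mathcal{P}$ of non-vanishing signals modulo phase; Theorem \ref{dimlem} (measure-zero image theorem) finishes the proof. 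In short, where you hope for a proper algebraic variety plus a witness, the paper uses an explicit parametrization of the ambiguity set plus dimension counting; without an analogue of Lemma \ref{stftproplem} (or some other elimination of the second signal), your plan cannot be completed.
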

\begin{proof}
The proof is based on a technique commonly known as {\em dimension counting}. The outline is as follows {\edit (see Section \ref{appA} for details)}: 

Consider the short-time sections $r$ and $r+1$. Since adjacent short-time sections overlap (due to $L < W$), there exists at least one index, say $n_0$, where both $\x \circ \w_{r}$ and $\x \circ \w_{r+1}$ have non-zero values. 

Since $W \leq \frac{N}{2}$, there can be at most $2^W$ distinct windowed signals $\x \circ \w_r$ (up to a phase) that have the same Fourier magnitude \cite{hofstetter}. Consequently, $\abs{x[n_0]}$ is restricted to $2^W$ values by the $r$th column of the STFT magnitude (let $\mathcal{S}_r$ denote the set of these values). Similarly, $\abs{x[n_0]}$ is restricted to $2^W$ values by the $r+1$th column of the STFT magnitude (denote the set of these values by $\mathcal{S}_{r+1}$).

By construction, $\mathcal{S}_r \cap \mathcal{S}_{r+1} \neq \phi$ as the STFT magnitude is generated by an underlying signal $\x_0$, i.e., $\abs{x_0[n_0]} \in \mathcal{S}_r \cap \mathcal{S}_{r+1}$. Using Lemma \ref{stftproplem} and Theorem \ref{dimlem}, we show that, for almost all non-vanishing signals, $\mathcal{S}_r \cap \mathcal{S}_{r+1}$ has cardinality one. In other words, $\x \circ \w_r$ is uniquely identified (up to a phase) almost surely. 

Since adjacent short-time sections overlap, non-vanishing signals are uniquely identified up to a {\em global} phase from the knowledge of $\x \circ \w_r$ (up to a phase) for $0 \leq r \leq R-1$ if $\w$ is non-vanishing.
\end{proof}

\subsection{Sparse signals}

While the aforementioned results provide guarantees for non-vanishing signals, they do not say anything about sparse signals. {\edit Reconstruction of sparse signals involves certain challenges which are not encountered in the reconstruction of non-vanishing signals.}

The following example is provided in \cite{eldarstft} to show that the time-shift ambiguity cannot be resolved for some classes of sparse signals and some choices of $\{\w, L\}$: Suppose $\{\w, L\}$ is chosen such that $L \geq 2$, $W$ is a multiple of $L$ and $w[n] = 1$ for all $0 \leq n \leq W-1$. Consider a signal $\x_1$ of length $N \geq L+1$ such that it has non-zero values only within an interval of the form $[(t -1)L + 1, (t-1)L+L-p]  \subset [0,N-1]$ for some integers $1 \leq p \leq L-1$ and $t \geq 1$. The signal $\x_2$ obtained by time-shifting $\x_1$ by $q \leq p$ units (i.e., $x_2[i] = x_1[i - q]$) has the same STFT magnitude. The issue with this class of sparse signals is that the STFT magnitude is identical to the Fourier magnitude because of which the time-shift and conjugate-flip ambiguities cannot be resolved. 

It is also shown that some sparse signals cannot be uniquely recovered even up to the trivial ambiguities for some choices of  $\{\w, L\}$ using the following example: Consider two non-overlapping intervals $[u_1, v_1] , [u_2,v_2] \subset [0, N-1]$ such that $u_2 - v_1 > W$, and take a signal $\x_1$ supported on $[u_1,v_1]$ and $\x_2$ supported on $[u_2,v_2]$. The magnitude-square of the STFT of $\x_1+\x_2$ and of $\x_1-\x_2$ are equal for any choice of $L$. The difficulty with this class of sparse signals is that the two intervals with non-zero values are separated by a distance greater than $W$ because of which there is no way of establishing relative phase using a window of length $W$. 

These examples demonstrate the fact that sparse signals are harder to recover than non-vanishing signals in this setup. Since the aforementioned issues are primarily due to a large number of consecutive zeros, the uniqueness guarantees for non-vanishing signals have been extended to incorporate sparse signals with limits on the number of consecutive zeros. In \cite{nawab}, it was shown that if $L$ consecutive samples, starting from the first non-zero sample, are known a priori, then the STFT magnitude can uniquely identify signals with less than $W-2L$ consecutive zeros for any $L$ if the window $\w$ is chosen such that it is non-vanishing and $2L \leq W \leq \frac{N}{2}$. 

Below, we extend Theorem \ref{STFTUN} to prove the following result for sparse signals: 

\begin{cor}
\label{STFTUS}
Almost all sparse signals with less than $\min\{W-L, L\}$ consecutive zeros can be uniquely identified (up to a global phase and time-shift) from their STFT magnitude if $\{\w, L, M\}$ satisfy 
\begin{enumerate}[(i)]
\item $\w$ is non-vanishing
\item $L < W \leq \frac{N}{2}$
\item $2W \leq M \leq N$.
\end{enumerate}
\end{cor}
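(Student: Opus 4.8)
The plan is to reduce the sparse case to the dimension-counting argument already established for Theorem \ref{STFTUN}, treating the nonzero support of $\x$ as an ``effective'' signal and using the consecutive-zeros bound to guarantee that this argument never stalls at an all-zero overlap. First I would dispose of the time-shift: since a sparse signal and any of its time-shifts produce the same STFT magnitude up to a relabeling of the short-time sections (leading and trailing zeros cannot be located from the measurements), the strongest achievable statement is uniqueness modulo a global time-shift, so I would fix a canonical representative, e.g.\ the one whose first nonzero sample occurs at index $0$. Having fixed the position, it remains to recover $\x$ up to a global phase exactly as in Theorem \ref{STFTUN}.

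The core step is to check that the hypothesis driving Theorem \ref{STFTUN}---that adjacent short-time sections $r$ and $r+1$ share an index $n_0$ with $x[n_0]\neq 0$---continues to hold. The two window supports overlap in a block of width $W-L$, so the requirement of fewer than $W-L$ consecutive zeros forces a nonzero entry inside every such overlap, which is precisely what is needed to align the phase of $\x\circ\w_r$ with that of $\x\circ\w_{r+1}$. With this in hand I would run the dimension-counting argument essentially verbatim on each pair: each windowed section $\x\circ\w_r$ is a (now possibly sparse) length-$W$ vector determined by its Fourier magnitude up to finitely many candidates, those candidates pin $\abs{x[n_0]}$ to a finite set $\mathcal{S}_r$ (resp.\ $\mathcal{S}_{r+1}$), and Lemma \ref{stftproplem} together with Theorem \ref{dimlem} make $\mathcal{S}_r\cap\mathcal{S}_{r+1}$ a singleton off a measure-zero set. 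Thus every $\x\circ\w_r$, and hence (dividing by the non-vanishing window) $\x$ on each window support, is recovered up to a phase, and chaining through the nonzero overlaps collapses these to a single global phase.

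The remaining role of the bound is to rule out the residual ambiguities that sparse supports can introduce, and this is where I expect the argument to be most delicate. A support confined within a single length-$L$ cell can be translated by less than one shift without altering any section's Fourier magnitude (the rectangular-window example in the text), producing genuine non-trivial ambiguity; I expect that requiring fewer than $L$ consecutive zeros is exactly what prevents the support from ``rattling'' inside the window grid, so that the only surviving freedoms are the global phase and the global time-shift already accounted for. Finally, the measure-theoretic bookkeeping is the main obstacle: sparse signals form a measure-zero subset of $\mathbb{C}^N$, so ``almost all'' must be interpreted on each fixed support pattern, where the admissible signals form a manifold of complex dimension equal to the support size. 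I would therefore apply the dimension count of Theorem \ref{dimlem} in the restricted coordinates to show the exceptional set is null on each such manifold, and since there are only finitely many support patterns obeying the consecutive-zeros bound, a finite union of these null sets remains null, yielding the claimed ``almost all'' statement.
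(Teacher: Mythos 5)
Your overall strategy matches the paper's outline: work on the manifold of signals with a fixed support pattern, use the $W-L$ bound to guarantee a nonzero sample in every overlap between adjacent sections, invoke Lemma \ref{stftproplem} and Theorem \ref{dimlem} for the dimension count, and finish by taking a finite union over admissible supports. However, there is a genuine gap exactly where you claim the dimension-counting argument of Theorem \ref{STFTUN} runs ``essentially verbatim'' once a canonical global shift is fixed. Fixing the first nonzero sample at index $0$ pins down only the \emph{global} time-shift; it does nothing about the per-section shift ambiguity. For sparse signals, the support endpoints $u_r, v_r$ of the windowed signal $\x \circ \w_r$ are not determined by the window alone, so when the sections of two candidate signals are decomposed as $\g_r \star \h_r$ and $\g_r \star \tilde{\h}_r$, the relative alignment of these decompositions across adjacent sections is unknown. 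This is precisely why the paper replaces the constraint (\ref{constraint4}) --- which fixes $h_r[0]$ to a single value --- by (\ref{constraint6}), which only forces $h_r[0]$ to equal one of at most $T$ candidate values (the equation holds for \emph{some} $0 \leq n \leq T-1$); the measure-zero conclusion survives because a finite union of images of continuously differentiable maps from $\mathbb{R}^{2N-2}$ is still null. Your proof never makes this modification, and without it the alignment equation you would write down is simply false for the adversarial pairs you are trying to exclude.

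A second, related discrepancy concerns the $<L$ consecutive-zeros requirement. You attribute it to preventing the support from ``rattling'' inside the window grid, and you flag this as speculative. In the paper's proof this bound plays a concrete and different role: it guarantees a nonzero sample among the \emph{non-overlapping} indices $[u_r, u_{r+1}-1]$ or $[v_r+1, v_{r+1}]$ of adjacent sections, which is what ensures that $h_r[0]$ is not already determined by $\{\g_{r+1}, \h_{r+1}\}$ through the consistency equations on the overlap --- i.e., that fixing it genuinely removes a degree of freedom and yields the dimension reduction. Finally, your justification for passing to a canonical representative (that any time-shift of a sparse signal produces the same STFT magnitude up to relabeling of sections) is false in general: shifting changes how samples fall under the sliding window, and only special signals, such as the paper's grid-cell example, exhibit this invariance. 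The reduction to a representative is harmless since the claim is only up to time-shift, but it cannot substitute for the ``for some $n$'' relaxation described above.
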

\begin{proof}
The $\min\{W-L,L\}$ bound on consecutive zeros ensures the following: For sufficient pairs of adjacent short-time sections, there is at least one index among the overlapping and non-overlapping indices respectively, where the underlying signal has a non-zero value. We refer the readers to Section \ref{appA2} for details.
\end{proof}

\section{Recovery Algorithms}

The classic alternating projection algorithm to solve phase retrieval \cite{gerchberg} has been adapted to solve STFT phase retrieval by Griffin and Lim \cite{lim}.  To this end, STFT phase retrieval is reformulated as the following least-squares problem:
\begin{equation}
\label{STFTleastsquares}
\min_\x \quad \sum_{r=0}^{R-1}  \sum_{m=0}^{N-1} { \left( Z_w[ m , r ] - \abs{\left<\f_m,\W_r\x\right>}^2 \right)^2 }.
\end{equation}
The Griffin-Lim (GL) algorithm attempts to minimize this objective by starting with a random initialization and imposing the time domain and STFT magnitude constraints alternately using projections. The objective is shown to be monotonically decreasing as the iterations progress. An important feature of the GL algorithm is its empirical ability to converge to the global minimum when there is substantial overlap between adjacent short-time sections. However, no theoretical recovery guarantees are available. To establish such guarantees, we rely on a semidefinite relaxation approach.

\subsection{Semidefinite relaxation-based algorithm}

Semidefinite relaxation has enjoyed considerable success in provably and stably solving several quadratic-constrained problems \cite{goemans, daspremont, candespl}. The steps to formulate such problems as a semidefinite program (SDP) are as follows: (i) Embed the problem in a higher dimensional space using the transformation $\X = \x\x^\star$, a process which converts the problem of recovering a signal with quadratic constraints into a problem of recovering a rank-one matrix with affine constraints. (ii) Relax the rank-one constraint to obtain a convex program. 

{\edit If the convex program has a unique solution $\X_0 = \x_0\x_0^\star$, then $\x_0$ is the unique solution to the quadratic-constrained problem (up to a global phase).  Many recent results in related problems like {\em generalized phase retrieval} \cite{candespl} and {\em phase retrieval using random masks} \cite{mahdi, gross} suggest that one can provide conditions, which when satisfied, ensure that the convex program has a unique solution $\X_0 = \x_0\x_0^\star$.} 

A semidefinite relaxation-based STFT phase retrieval algorithm, called STliFT, was explored in \cite{kishorestft} and \cite{sun}. The details of the algorithm are provided in Algorithm \ref{algo:STliFT}. In the following, we develop conditions on $\{\x_0, \w, L\}$ which ensure that the convex program (\ref{STFTPRR}) has $\X_0 = \x_0\x_0^\star$ as the unique solution. Consequently, under these conditions, STliFT uniquely recovers the underlying signal up to a global phase. 


\begin{algorithm}
\caption{STliFT}
\label{algo:STliFT}
\textbf{Input:} STFT magnitude measurements $Z_w[m,r]$ for $1 \leq m \leq M$ and $0 \leq r \leq R-1$, $\{\w, L\}$. \\
\textbf{Output:} Estimate $\hat{\x}$ of the underlying signal $\x_0$.

\begin{itemize}
\item Obtain  $\hat{\X}$ by solving:
\begin{align}
\label{STFTPRR}
&\textrm{minimize} \hspace{1cm} \trace( \X )  \\
\nonumber & \textrm{subject to} \hspace{0.9cm} {Z_w[m,r]} =  \trace(\W_r^\star\f_m \f_m^\star\W_r\X) \\
& \hspace{2.3cm} \X \succcurlyeq 0 \nonumber 
\end{align}
for $1 \leq m \leq M$ and $0 \leq r \leq R-1$.
\item Return $\hat{\x}$, where $\hat{\x} \hat{\x}^\star$ is the best rank-one approximation of $\hat{\X}$.
\end{itemize}
\end{algorithm}

Based on extensive numerical simulations, we conjecture the following:

\begin{conj}
The convex program (\ref{STFTPRR}) has a unique solution $\X_0=\x_0\x_0^\star$, for most non-vanishing signals $\x_0$, if 
\begin{enumerate}[(i)]
\item $\w$ is non-vanishing
\item $2L \leq W \leq \frac{N}{2}$
\item $4L \leq M \leq N$.
\end{enumerate}
\label{conj}
\end{conj}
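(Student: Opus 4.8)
The plan is to prove the conjecture by constructing an explicit \emph{dual certificate}, the standard device for establishing exactness of a trace-minimizing semidefinite relaxation. Writing the measurement map as $\mathcal{A}(\X)_{m,r} = \trace(\W_r^\star \f_m \f_m^\star \W_r \X)$, the Lagrange dual of (\ref{STFTPRR}) is governed by the slack matrix
\begin{equation}
\SSSS = \mathbf{I} - \sum_{r=0}^{R-1}\sum_{m=1}^{M} \lambda_{m,r}\,\W_r^\star \f_m \f_m^\star \W_r ,
\end{equation}
where $\lambda_{m,r} \in \mathbb{R}$ are the dual variables for the equality constraints. A short computation shows that once $\SSSS \succcurlyeq 0$ and $\SSSS\x_0 = 0$, every feasible $\X$ satisfies $\trace(\X)-\trace(\X_0) = \trace(\X\SSSS) \geq 0$, so $\X_0 = \x_0\x_0^\star$ is optimal; and if in addition $\rank(\SSSS) = N-1$, then any optimal $\X$ has its range inside $\ker(\SSSS) = \mathrm{span}\{\x_0\}$ and hence equals $\X_0$. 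The whole problem therefore reduces to choosing $\{\lambda_{m,r}\}$ so that $\SSSS$ is positive semidefinite with kernel \emph{exactly} the line spanned by $\x_0$.

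The construction should exploit the one-dimensional banded geometry of the STFT. Because each $\W_r$ is diagonal and supported only on the $W$ consecutive indices touched by the $r$th shift, every $\W_r^\star \f_m \f_m^\star \W_r$ is supported on a single $W \times W$ principal block, and summing over $m=1,\dots,M$ with coefficients $\lambda_{m,r}$ synthesizes inside that block a Hermitian matrix whose $k$th diagonal equals the partial Fourier sum $\sum_m \lambda_{m,r}\, e^{i 2\pi m k/N}$. Hence $\SSSS = \mathbf{I} - \sum_r \mathbf{B}_r$ is a chain of overlapping principal blocks, consecutive blocks sharing $W-L \geq L$ indices by virtue of $2L \leq W$. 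The strategy I would pursue is to design the per-section coefficients $\{\lambda_{m,r}\}_m$ so that, locally, each block reproduces the rank-one structure dictated by $\x_0$ on its support — essentially a windowed phase-retrieval certificate, feasible because within a window the measurements determine $\W_r\x_0$ up to phase (cf. the sketch of Theorem \ref{STFTUN}) — and then to use the overlaps to glue the local phases into the single global phase, forcing $\ker(\SSSS) = \mathrm{span}\{\x_0\}$. The hypothesis $4L \leq M$ enters precisely here: it must supply enough correlation lags, after the super-resolution that discards high frequencies, to pin the relative phase across each length-$(W-L)$ overlap; whereas $2W \leq M$ would resolve the entire windowed autocorrelation, the content of the conjecture is that the much smaller budget $4L$ already suffices once each overlap spans at least half a window.

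The main obstacle is to control the \emph{global rank} of $\SSSS$: to prove that the chain of overlapping local certificates leaves a kernel of dimension exactly one, rather than accidentally creating additional null directions where positive definiteness on consecutive blocks fails to reinforce across the overlaps. This coupling between the $\SSSS \succcurlyeq 0$ constraint and the rank constraint, propagated along a \emph{free-boundary} chain, is also where the qualifier ``most signals'' must enter: one expects the bad set — non-vanishing $\x_0$ for which the blocks degenerate and the rank drops below $N-1$ — to be of measure zero, and I would establish this either by an explicit certificate with a quantitative lower bound on the smallest nonzero eigenvalue, or by a genericity/perturbation argument excluding a measure-zero variety. The clearest evidence that this is the hard step is that the two cases the paper does prove both remove the free-boundary difficulty: Theorem \ref{lhalfthm} assumes the first $L$ samples are known, anchoring one end of the chain and eliminating the global-phase freedom, while Theorem \ref{l1thm} takes $L=1$, where the overlap is maximal and the chain collapses index-by-index. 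I would therefore start from the certificate behind Theorem \ref{lhalfthm} and attempt to dispense with the anchoring assumption by absorbing the residual one-dimensional (global-phase) freedom directly into the kernel — which is exactly the gap between the proven theorems and the full conjecture.
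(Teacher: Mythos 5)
You have not proved the statement, and indeed the statement is one the paper itself leaves open: Conjecture \ref{conj} is stated on the basis of numerical evidence, and the paper proves it only under additional hypotheses --- Theorem \ref{lhalfthm} assumes the first $\big\lfloor \frac{L}{2} \big\rfloor + 1$ samples of $\x_0$ are known a priori, and Theorem \ref{l1thm} assumes $L=1$. Your submission is a proof \emph{plan}, not a proof. The reduction you give is correct and standard: if multipliers $\lambda_{m,r}$ exist such that $\SSSS = \mathbf{I} - \sum_{r,m}\lambda_{m,r}\,\W_r^\star\f_m\f_m^\star\W_r$ satisfies $\SSSS \succcurlyeq 0$, $\SSSS\x_0 = 0$ and $\rank(\SSSS)=N-1$, then $\X_0=\x_0\x_0^\star$ is the unique solution of (\ref{STFTPRR}). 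But the existence of such multipliers is exactly what must be established, and your text defers it (``an explicit certificate with a quantitative lower bound on the smallest nonzero eigenvalue, or \ldots a genericity/perturbation argument''). No construction of $\{\lambda_{m,r}\}$ is given, positive semidefiniteness is never verified, and the kernel is never controlled across the overlapping blocks; the ``main obstacle'' paragraph names the difficulty but does not resolve it. Since that obstacle \emph{is} the content of the conjecture, what remains proven in your write-up is only the well-known sufficiency of a dual certificate.

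Two further remarks. First, your structural diagnosis is accurate and consistent with the paper: the two proven cases do remove the free-boundary/global-phase freedom exactly as you say. But note that the paper's technique for those partial results differs from the single global $N\times N$ certificate you propose. It argues incrementally: prior information plus the measurements of section $r=1$ force every \emph{feasible} matrix to agree with $\x_0\x_0^\star$ on an $(L+1)\times(L+1)$ principal block (via the local certificate of Lemma \ref{dualcertlem}, an explicit Toeplitz-block construction in (\ref{dualcertificate})); this is propagated section by section using $2L \leq W$; and the proof closes with uniqueness of the positive semidefinite completion of a matrix whose diagonal and first $W-L$ off-diagonals are sampled from a rank-one matrix with non-vanishing diagonal. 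That route yields the stronger conclusion that (\ref{STFTPRR}) has a unique feasible point, so the objective is irrelevant in the noiseless setting. Second, if you pursue your plan, the step that must be made rigorous is the gluing: why overlaps of length $W-L \geq L$, with only $M \geq 4L$ frequencies per section, force the local kernels to align into a single one-dimensional global kernel for \emph{most} (not all) non-vanishing $\x_0$ --- without an anchor, that residual one-dimensional freedom is the global phase, and quantifying the exceptional (measure-zero) set of signals for which the rank of $\SSSS$ drops below $N-1$ is precisely the unresolved core of the conjecture.
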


The number of phaseless measurements considered can be calculated as follows: The total number of short-time sections is $\lceil{\frac{N+W-1}{L}}\rceil$. For each short-time section, $M = 4L$ phaseless measurements are sufficient. Hence, the total number of phaseless measurements is $\lceil{\frac{N+W-1}{L}}\rceil \times 4L \leq 4 \left({N+W} \right) + 2W$. Consequently, when $W = o(N)$, this number is $( 4 + o(1))N$, which is order-wise optimal. In fact, in generalized phase retrieval, it is conjectured that $( 4 - o(1))N$ phaseless measurements are necessary \cite{balan}.

{\edit The proof techniques used in \cite{mahdi} and \cite{candespl} are not applicable in the STFT setup. In \cite{mahdi} and  \cite{candespl}, the measurement vectors are chosen from a random distribution such that they satisfy the restricted isometry property. Furthermore, the randomness in the measurement vectors is used to construct approximate dual certificates based on concentration inequalities. In the STFT setup, testing whether the given measurement vectors satisfy the restricted isometry property is difficult. Also, due to the lack of randomness in the measurement vectors, a different approach is required to construct dual certificates. 

In the following, we develop a proof technique for the STFT setup, and use it to prove Conjecture \ref{conj}, with additional assumptions.}

\begin{figure*}
\begin{center}
\subfloat[{$N=32$, $M=4L$, and various choices of $\{L,W\}$.}]{\includegraphics[scale = 0.41]{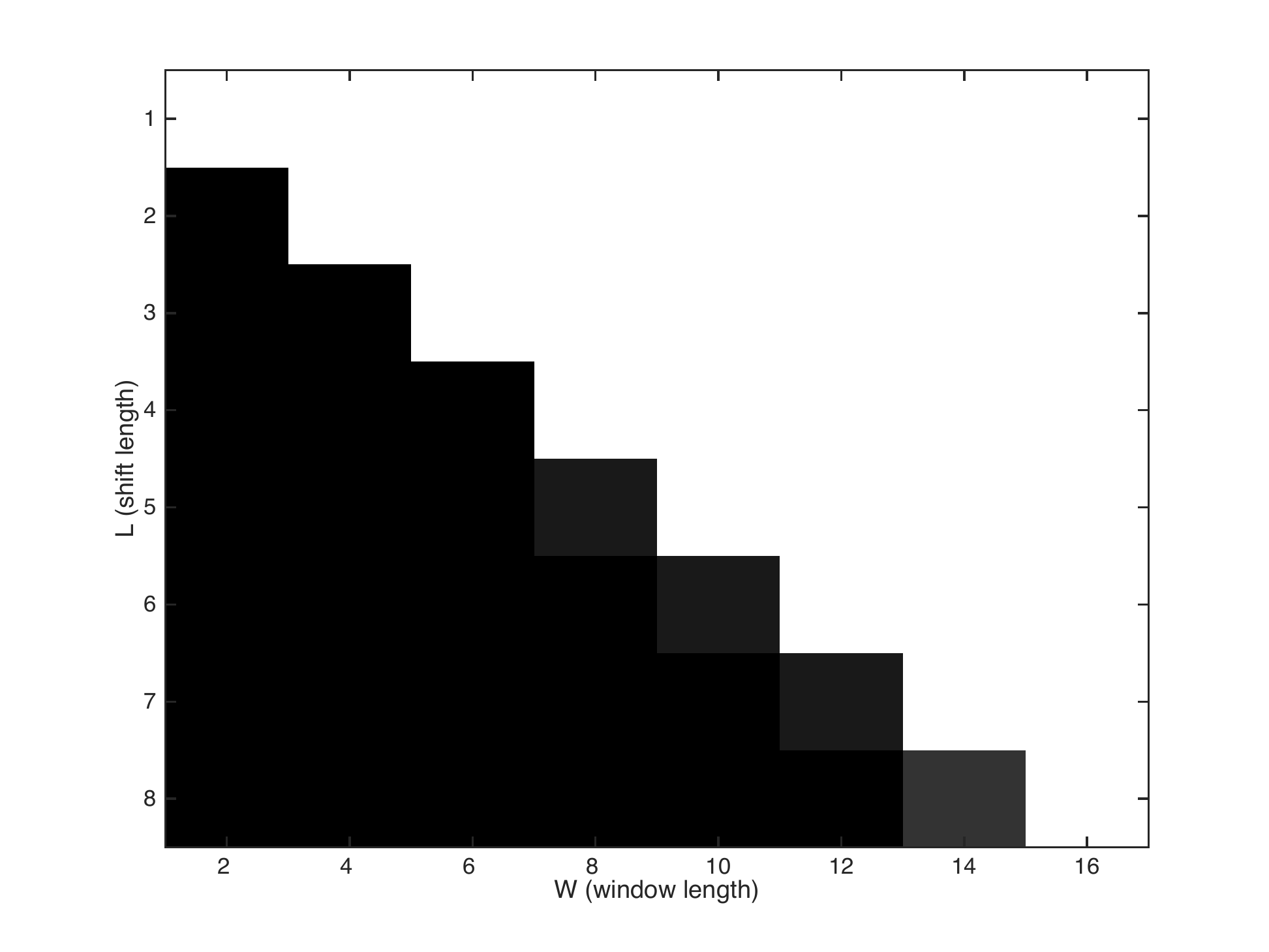} \label{fig:successprobabilitystft}
}  \hspace{1cm}
\subfloat[ {$N=32$, $W=16$, and various choices of $\{L, M\}$ (demonstrates super-resolution).}]{\includegraphics[scale = 0.41]{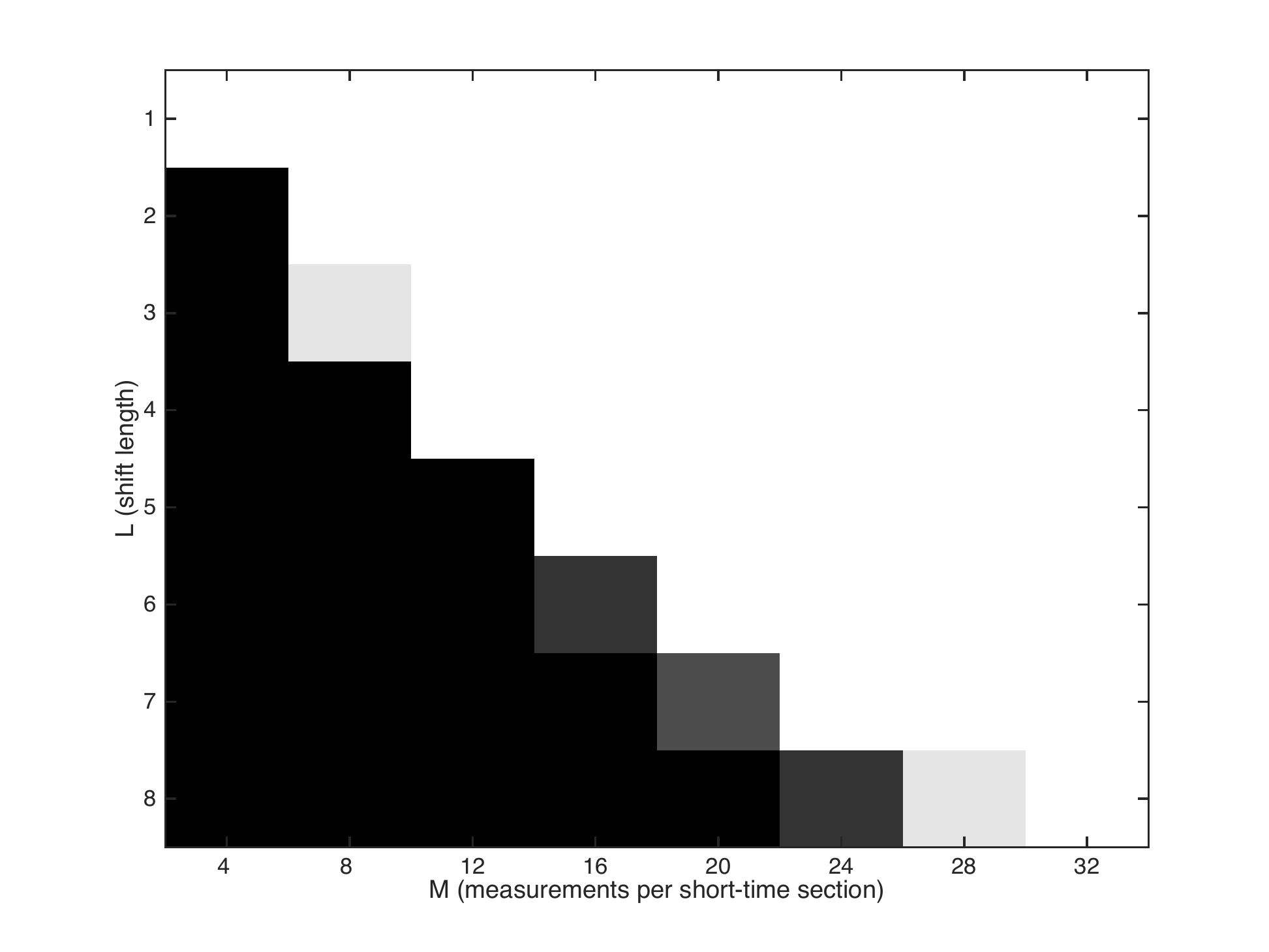} \label{fig:successprobabilitystftsr}}  
\end{center}
\caption{Probability of successful recovery using STliFT in the noiseless setting (white region: success with probability $1$, black region: success with probability $0$).}
\end{figure*}

\begin{thm}
The convex program (\ref{STFTPRR}) has a unique feasible matrix $\X_0=\x_0\x_0^\star$, for almost all non-vanishing signals $\x_0$, if 
\begin{enumerate}[(i)]
\item $\w$ is non-vanishing
\item $2L \leq W \leq \frac{N}{2}$
\item $4L \leq M \leq N$
\item $x_0[n]$ for $0 \leq n \leq \big\lfloor{\frac{L}{2}\big\rfloor}$ is known a priori. 
\end{enumerate}
\label{lhalfthm}
\end{thm}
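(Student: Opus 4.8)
The plan is to show that the affine measurement constraints, together with the side information in condition (iv) and the constraint $\X\succeq 0$, admit $\X_0=\x_0\x_0^\star$ as their only common solution; since this is strictly stronger than optimality, it immediately yields the stated uniqueness for (\ref{STFTPRR}). I would first record the linear-algebraic engine. Because all constraints in (\ref{STFTPRR}) are \emph{linear} in $\X$, the feasible set is $\{\X\succeq 0:\ \mathcal{A}(\X)=\mathcal{A}(\X_0)\}$ for the measurement map $\mathcal{A}$, and for a PSD matrix with strictly positive diagonal the relation $|X[n,n']|^2\le X[n,n]\,X[n',n']$ holds with equality exactly when the corresponding rows of any factor of $\X$ are proportional. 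Chaining such proportionalities across consecutive indices forces $\X$ to be rank one, and matching a single known entry then fixes it to $\x_0\x_0^\star$. Hence the whole problem reduces to determining, from the measurements, the diagonal of $\X$ and one adjacent off-diagonal, after which non-vanishingness (almost-sure positivity of the diagonal) finishes the argument.

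Next I would extract what each short-time section reveals about $\X$. For fixed $r$, inverting the DFT across the available frequencies $1\le m\le M$ (and using the conjugate symmetry $Z_w[m,r]=Z_w[N-m,r]$) recovers the window-masked correlations $a_r[k]=\sum_n X[n,n-k]\,w[rL-n]\,\overline{w[rL-n+k]}$ for all lags in the band $|k|\le 2L$; the condition $4L\le M$ is precisely what makes this band recoverable after the high frequencies are discarded (super-resolution), and $\w$ non-vanishing guarantees that every weight $w[rL-n]\,\overline{w[rL-n+k]}$ in the sum is nonzero. Thus, for each lag $k\ge 1$, the measurements give a banded linear system relating $\{a_r[k]\}_r$ to the single anti-diagonal $\{X[n,n-k]\}_n$, with the supports of consecutive sections sliding by $L$ and overlapping in $W-L\ge L$ indices (this is where $2L\le W$ enters).

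I would then run an induction on the signal index, seeded by condition (iv). Knowing $x_0[0],\dots,x_0[\lfloor L/2\rfloor]$ pins the top-left $(\lfloor L/2\rfloor+1)$-block of any feasible $\X$ to the corresponding block of $\x_0\x_0^\star$ (these enter as additional affine constraints). Processing the sections in increasing $r$ and the lags in increasing $k$, each equation $a_r[k]$ contains exactly one anti-diagonal entry of larger index than all previously resolved ones; since its window weight is nonzero, that entry is determined, and sweeping over $r$ and $k\le 2L$ recovers the entire off-diagonal band of $\X$. The remaining task is the diagonal: the entries $X[n,n]$ appear only through the lag-zero sums $a_r[0]=\sum_n X[n,n]\,|w[rL-n]|^2$, which by themselves are underdetermined (each shift introduces $L$ new diagonal unknowns). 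Here I would combine these linear sums with the now-known off-diagonal band and the inequalities $|X[n,n']|^2\le X[n,n]\,X[n',n']$: the width-$2L$ band of exactly known off-diagonals supplies enough tight constraints to force the relevant $2\times2$ minors to vanish and thereby pin the diagonal, after which the chaining of the first paragraph certifies $\X=\x_0\x_0^\star$. The reduced seed $\lfloor L/2\rfloor$ (rather than the $L$ of \cite{nawab}) reflects precisely that the PSD structure supplies the missing half of the equations.

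I expect the main obstacle to be this last step: unlike the off-diagonal sweep, which is an honest triangular linear solve, forcing the diagonal amounts to ruling out higher-rank feasible matrices, i.e.\ proving that the only PSD completion consistent with the known band and the lag-zero sums is the rank-one one. I would handle this by a genericity argument in the spirit of Theorem \ref{STFTUN}: the set of non-vanishing $\x_0$ for which some inductive step degenerates (a vanishing coefficient, a coincident pair of correlation roots, or nonzero PSD slack in a minor that should be tight) is contained in the zero set of a nonzero polynomial in the entries of $\x_0$ and hence has Lebesgue measure zero. This accounts for the ``almost all'' qualifier and confirms that the recoverable band $|k|\le 2L$ (from $4L\le M$) together with the overlap $W-L\ge L$ (from $2L\le W$) are exactly the ingredients needed to make every step of the sweep well posed.
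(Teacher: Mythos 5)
There are two genuine gaps here, and each one sits exactly where the paper has to do real work. First, your ``honest triangular linear solve'' for the off-diagonal band does not exist. Count the unknowns: moving from short-time section $r$ to $r+1$ introduces $L$ new signal indices, and each new index appears in up to $W$ band entries $X[n,n']$, so each new section brings on the order of $LW$ new real unknowns against only on the order of $W$ new autocorrelation equations; already for section $r=1$ with the prior $x_0[0],\dots,x_0[\lfloor L/2\rfloor]$, the lag-$(L-1)$ equation contains two undetermined entries $X[0,L-1]$ and $X[1,L]$ (for $L\geq 3$). So the affine constraints alone are badly underdetermined \emph{even for the off-diagonals}, not just for the diagonal, and the positive semidefinite constraint must be invoked from the start. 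This is precisely why the paper's proof of Theorem \ref{lhalfthm} does not eliminate variables: it proves Lemma \ref{dualcertlem} by semidefinite duality, showing (a) injectivity of the constraint map on the tangent space $T=\{\s_0\vv^\star+\vv\s_0^\star\}$ at $\s_0\s_0^\star$, and (b) existence of an explicit dual certificate $\D$ of the block-Toeplitz form (\ref{dualcertificate}) built from a lower triangular Toeplitz $\LL$ with $\LL\s_1+\s_2=0$; the lemma is then applied incrementally section by section, and the proof closes with the unique PSD completion of a band sampled from a rank-one matrix with nonzero diagonal \cite{horn} --- the one ingredient your sketch shares with the paper.

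Second, your treatment of the super-resolution regime $4L\leq M<2W$ is incorrect: with only $M$ real measurements per section constraining the $2W-1$ real degrees of freedom of the short-time autocorrelation, the linear system is underdetermined and its kernel mixes all lags, so no band $|k|\leq 2L$ can be ``recovered by inverting the DFT''; the role of $M\geq 4L$ is not to make any lags identifiable. The paper instead keeps the duality framework and changes only the certificate construction: the dual certificate must now lie in the span of the available measurement matrices $\sum_{m=1}^{M}\alpha_m\W_r^\star\f_m\f_m^\star\W_r$ with real $\alpha_m$, and the paper builds it from a vector $\bl$ solving a linear system $\mathbf{A}\bl=\bbb$ whose determinant is a polynomial in the entries of $\x_0$, nonzero for one explicit substitution, hence nonzero almost surely. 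That genericity argument --- not a degeneracy analysis of an elimination sweep --- is where the ``almost all'' in the statement actually comes from. Finally, you yourself flag the diagonal-pinning step as the main obstacle and defer it to genericity; but that step (together with the off-diagonal issue above) \emph{is} the theorem, and deferring it leaves the proposal without a proof of its central claim.
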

\begin{proof}
See Section \ref{appC}.
\end{proof}

{\edit While it is sufficient to show that (\ref{STFTPRR}) has a unique solution $\X_0=\x_0\x_0^\star$, observe that Theorem \ref{lhalfthm} ensures that (\ref{STFTPRR}) has a unique feasible matrix. This is a stronger condition, and as a consequence, the choice of the objective function does not matter in the noiseless setting. While this might suggest that the requirements of the setup are strong, we argue that it is not the case. In fact, this phenomenon is also observed in generalized phase retrieval (Section $1.3$ in \cite{candespl}) and phase retrieval using random masks (Theorem $1.1$ in \cite{mahdi}).}

{\edit Theorem \ref{lhalfthm} assumes prior knowledge of the first $\lceil{\frac{L}{2}}\rceil$ samples, i.e., half of the second short-time section is required to be known a priori. This is not a lot of prior information if $W \ll N$, which is typically the case. When $W=o(N)$, the fraction of the signal that is required to be known a priori is less than $\frac{W}{N}$, which tends to $0$ as $N \rightarrow \infty$.}

\begin{thm}
The convex program (\ref{STFTPRR}) has a unique feasible matrix $\X_0=\x_0\x_0^\star$, for almost all non-vanishing signals $\x_0$, if
\begin{enumerate}[(i)]
\item $\w$ is non-vanishing
\item $2 \leq W \leq \frac{N}{2} $
\item $4 \leq M \leq N$
\item $L = 1$.
\end{enumerate}
\label{l1thm}
\end{thm}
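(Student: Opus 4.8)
The plan is to specialize the argument behind Theorem \ref{lhalfthm} (Section \ref{appC}) to $L=1$ and to exploit the maximal overlap that $L=1$ provides in order to remove hypothesis (iv), i.e.\ the prior knowledge of the first samples. The point is that when $L=1$ the leftmost short-time sections are so degenerate that they anchor the recovery by themselves. Concretely, I would take an arbitrary matrix $\X$ feasible for (\ref{STFTPRR}) and prove, by induction on the column index $t$, that its leading $(t+1)\times(t+1)$ block equals $\aaa_t\aaa_t^\star$ with $\aaa_t=(x_0[0],\dots,x_0[t])^T$; taking $t=N-1$ then gives $\X=\x_0\x_0^\star$.

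For the base case, note that for $L=1$ the $r=0$ window is supported (within $[0,N-1]$) only at $n=0$, so $\W_0=\mathrm{diag}(w[0],0,\dots,0)$ and the feasibility constraint collapses to $Z_w[m,0]=\abs{w[0]}^2X[0,0]$ for every $m$. Since $\w$ is non-vanishing, this determines $X[0,0]=\abs{x_0[0]}^2$. This is precisely the information that Theorem \ref{lhalfthm} had to assume; here it is handed to us by the measurements, which is the essence of dropping hypothesis (iv).

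The inductive step is driven by the positive-semidefinite constraint. Suppose the leading $(t+1)\times(t+1)$ block of $\X$ equals $\aaa_t\aaa_t^\star$; since $\x_0$ is non-vanishing this block has rank one with column space $\mathrm{span}(\aaa_t)$. Because $\X\succcurlyeq0$, the principal submatrix indexed by $\{0,\dots,t+1\}$ is itself positive semidefinite with a rank-one leading block, which forces its border into that column space: the new off-diagonal entries must obey $X[n,t+1]=\lambda\,x_0[n]$ for a single scalar $\lambda\in\mathbb{C}$ and all $n\le t$, while $X[t+1,t+1]=c\in\mathbb{R}$. Thus the PSD constraint alone collapses the unknowns in the new column to the three real parameters $(\mathrm{Re}\,\lambda,\mathrm{Im}\,\lambda,c)$. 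This is where super-resolution is dealt with for free: we never have to invert the partial ($M<2W-1$) frequency data, because the low-rank border structure already parametrizes the new entries.

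It then remains to pin down $(\lambda,c)$ from the $M$ constraints contributed by section $r=t+1$. These constraints are real-linear in $(\mathrm{Re}\,\lambda,\mathrm{Im}\,\lambda,c)$ once the determined prefix is substituted, and they are consistent with the true values $\lambda=\overline{x_0[t+1]}$, $c=\abs{x_0[t+1]}^2$; the induction closes if this $M\times3$ system has full column rank. Here $W\ge2$ guarantees that the window of section $t+1$ overlaps the determined prefix in at least one (nonzero) index, so the sensitivity to $\lambda$ does not vanish, while $M\ge4$ supplies enough distinct frequencies to separate the sensitivity to $c$ (which is constant in $m$) from the sensitivities to $\lambda,\overline\lambda$ (which carry the nonzero frequencies $\pm(t+1-n)$). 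The hard part is to show this linear system has full column rank for almost all non-vanishing $\x_0$: I would argue that the signals for which it degenerates form the zero set of a nontrivial analytic function of the entries of $\x_0$ and $\w$, hence a set of measure zero, exactly as quantified by Lemma \ref{stftproplem} and Theorem \ref{dimlem}; since the induction has only $O(N)$ steps, the union of these exceptional sets is still of measure zero. Carrying out the count that makes $M\ge4$ (rather than $M\ge3$) sufficient is the most delicate part, and I would reuse the estimates developed for Section \ref{appC}.
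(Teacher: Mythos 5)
Your base case is, by itself, the paper's entire proof of this theorem: when $L=1$ the $r=0$ section gives $Z_w[m,0]=\abs{w[0]}^2X[0,0]$, so the measurements hand you $\abs{x_0[0]}$ (equivalently $X[0,0]=\abs{x_0[0]}^2$), and since $\big\lfloor\frac{L}{2}\big\rfloor=0$ this is precisely hypothesis (iv) of Theorem \ref{lhalfthm}; the paper stops at this observation and invokes Theorem \ref{lhalfthm} as a black box. You instead re-prove the propagation, and by a genuinely different mechanism: where Section \ref{appC} runs a dual-certificate argument per short-time section (Lemma \ref{dualcertlem}, the Toeplitz certificate (\ref{dualcertificate})), only pins down the diagonal and the first $W-L$ off-diagonals, and finishes with the unique-PSD-completion theorem of \cite{horn}, you maintain the stronger invariant that the entire leading $(t+1)\times(t+1)$ block equals $\aaa_t\aaa_t^\star$, using the fact that positive semidefiniteness forces the border column into the range of the rank-one leading block, and then pin down the three real parameters $(\mathrm{Re}\,\lambda,\mathrm{Im}\,\lambda,c)$ from an $M\times 3$ real-linear system. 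This primal route is more elementary (no duality, no completion theorem), it establishes unique feasibility directly, and it compresses the super-resolution issue into a single finite-dimensional rank question; the paper's dual-certificate route, by contrast, is the one that generalizes to $L>1$, where the unknown border is an $L$-dimensional block rather than a single column and no such three-parameter reduction is available.

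Two caveats. First, the full-column-rank claim is the entire mathematical content of your inductive step, and you leave it as a sketch. It is fillable, but the right tool is the one the paper itself uses in the $4L\le M<2W$ part of Section \ref{appC}: a suitable $3\times 3$ minor of your system is a polynomial in the entries of $\x_0$, hence either identically zero or nonzero off a measure-zero set, and nontriviality is checked on a single test signal (e.g., $\x_0$ a spike at index $t$, which makes the three rows correspond to three distinct, hence non-collinear, points on a circle together with a constant column); a union of $N$ such null sets is still null. Your citation of Lemma \ref{stftproplem} and Theorem \ref{dimlem} for this step is a misattribution --- those belong to the dimension-counting proof of Theorem \ref{STFTUN} and play no role here. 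Second, the distinction you worry about between $M\ge 4$ and $M\ge 3$ is a non-issue for the statement as given: you only need rank $3$ from some three of the available frequencies, and additional equations, being consistent with the true solution, can only help.
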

\begin{proof}
This is a direct consequence of Theorem \ref{lhalfthm}. The value of $\abs{x_0[0]}$ (and hence $x_0[0]$, without loss of generality) can be inferred from the STFT magnitude if $L=1$.
\end{proof}

When $L=1$, the number of phaseless measurements is $4(N + W)$, which is again order-wise optimal. For example, when $W=2$, at most $4N + 8$ phaseless measurements are considered. Unlike Theorem \ref{lhalfthm}, no prior information is necessary.


Theorems \ref{lhalfthm} and \ref{l1thm} can be seamlessly extended to incorporate sparse signals:
\begin{cor}
The convex program (\ref{STFTPRR}) has a unique feasible matrix $\X_0=\x_0\x_0^\star$, for almost all sparse signals $\x_0$ which have at most $W-2L$ consecutive zeros, if
\begin{enumerate}[(i)]
\item $\w$ is non-vanishing
\item $2L \leq W \leq \frac{N}{2} $
\item $4L \leq M \leq N$
\item Either $L=1$ or $x_0[n]$ for $i_0 \leq n < i_0 + L$ is known a priori, where $i_0$ is the smallest index such that $x_0[i_0] \neq 0$. 
\end{enumerate}
\label{lhalfthms}
\end{cor}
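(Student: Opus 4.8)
The plan is to extend the recovery argument of Theorems \ref{lhalfthm} and \ref{l1thm} (Section \ref{appC}) from the non-vanishing setting to sparse signals, treating each support pattern separately and using the bound on consecutive zeros to guarantee that the step-by-step recovery underlying that proof never stalls; the threshold $W-2L$ and the $L$-sample block of condition (iv) mirror the uniqueness guarantee of \cite{nawab}, now recast for the SDP. Recall that for a non-vanishing signal the proof shows that the constraints of (\ref{STFTPRR}) together with $\X \succcurlyeq 0$ admit $\x_0\x_0^\star$ as their only feasible matrix: the retained low-frequency measurements ($4L \le M$) let one determine, generically and sequentially, the entries of $\X$ on a band of diagonals by advancing $L$ indices per short-time section and anchoring each new block of samples to already-recovered ones through the overlap ($2L \le W$), with the known initial block fixing the finitely many discrete choices left open. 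I would keep this skeleton intact and isolate exactly the two places where zeros interfere: an anchoring product that straddles a run of zeros vanishes and transmits no information, and the rows and columns of any feasible $\X$ indexed by the zero locations must be forced to vanish.

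First I would fix the zero pattern of $\x_0$ --- any subset of indices whose maximal runs have length at most $W-2L$ --- and re-examine the genericity step of Section \ref{appC} restricted to the affine subspace of signals with that pattern; since there are finitely many admissible patterns for fixed $N$, a measure-zero exceptional set within each yields a measure-zero exceptional set overall, matching the ``almost all sparse signals'' claim. Second, I would verify that the recovery advances across every run of zeros: when the sequential step reaches a new block of samples, the bound $W-2L$ should guarantee that the covering window still contains an already-determined \emph{nonzero} sample within the reach used for anchoring, so the associated (generically invertible) linear system stays solvable and the relative phases of the nonzero samples are transported across the gap. Third, anchoring the resulting chain at the known block $x_0[i_0],\ldots,x_0[i_0+L-1]$ --- which in the sparse case also fixes the absolute time position and hence removes the time-shift ambiguity noted in Corollary \ref{STFTUS} --- determines all nonzero entries of $\X$, so its nonzero principal submatrix is forced to equal that of $\x_0\x_0^\star$.

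It then remains to certify the zero rows and columns. I would argue, as part of the same generic linear-algebra step, that the individual magnitudes $X[n,n]=\abs{x_0[n]}^2$ are pinned down --- in particular $X[j,j]=0$ at every zero location $j$ --- after which $\X \succcurlyeq 0$ forces the entire $j$th row and column to vanish; combined with the previous paragraph this shows the feasible matrix equals $\x_0\x_0^\star$ on all entries. The alternative branch $L=1$ of condition (iv) needs no prior block: exactly as in Theorem \ref{l1thm}, the first nonzero magnitude $\abs{x_0[i_0]}$ is directly readable from the measurements, which both locates $i_0$ and seeds the propagation.

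The step I expect to be the main obstacle is the second one --- showing that the anchoring survives runs of zeros up to length exactly $W-2L$. For non-vanishing signals this is automatic through lag-one products, but here it couples the window length $W$, the shift $L$, and the super-resolution budget $M\ge 4L$: one must track which short-time sections jointly cover a pair of nonzero samples bracketing a run of zeros, and show that the corresponding linear systems remain generically full-rank even though several of their entries are now forced to be zero. Pinning the threshold at $W-2L$ (and separating the degenerate case $L=1$) is precisely where the hypotheses $2L\le W$ and $4L\le M$ must be spent, and I expect this bookkeeping, together with the certification of the zero diagonals above, to be the delicate part of the argument.
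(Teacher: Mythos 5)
Your plan correctly locates the two places where sparsity interferes (anchoring across a run of zeros, and the rows/columns of $\X$ indexed by zero locations), and your observation that $X[j,j]=0$ together with $\X \succcurlyeq 0$ forces the $j$th row and column to vanish is sound. But the proposal stops short precisely at the point you yourself flag as ``the main obstacle,'' and that step is the entire content of the corollary. Moreover, the mechanism you propose to keep---phase propagation through ``generically invertible linear systems,'' with the known block fixing finitely many discrete choices---is not how the paper's proof of Theorem \ref{lhalfthm} works, and it is not sufficient here: showing that the true signal's values can be transported across gaps does not rule out \emph{higher-rank} feasible matrices. The paper's argument (Section \ref{appC}) is an incremental dual-certificate argument: Lemma \ref{dualcertlem} shows, window by window, that $\s_0\s_0^\star$ is the only PSD matrix consistent with the already-determined entries and the autocorrelation measurements, by verifying injectivity on the tangent space and exhibiting a certificate $\D$ built from a lower-triangular Toeplitz matrix $\LL$ satisfying $\LL\s_1+\s_2=0$ (together with $\Lambda \succcurlyeq 0$, $\Lambda\s_1=0$); one then finishes with the unique PSD completion of a rank-one band.

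The missing idea is the simple counting that makes this certificate survive sparsity, and it is exactly where the threshold $W-2L$ is spent. In each window, $\s_2$ (the as-yet-undetermined entries) has length at most $L$, while $\s_1$, redefined to start at the first already-determined \emph{non-zero} entry so that $s_1[0]\neq 0$, has length at least $(W-L)-(W-2L)=L$, precisely because the signal has at most $W-2L$ consecutive zeros and the $L$ samples $x_0[n]$, $i_0 \leq n < i_0+L$, are known a priori (or, when $L=1$, can be inferred as in Theorem \ref{l1thm}). Hence the length of $\s_1$ is at least that of $\s_2$ and the required $\LL$ always exists, so Lemma \ref{dualcertlem} goes through unchanged. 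The second quantitative fact you would still need is that the completion step tolerates zeros: the unique PSD completion from the diagonal and first $W-L$ off-diagonals holds when there are fewer than $W-L$ consecutive zeros, which follows from the hypothesis since $W-2L<W-L$. Without these two calculations the bound $W-2L$ is never actually used, and your outline remains a plan---with its central step acknowledged as open---rather than a proof.
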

\begin{proof}
See Section \ref{appC}.
\end{proof}



\subsection{Noisy Setting}

In practice, the measurements are contaminated by additive noise, i.e., the measurements are of the form
\begin{equation}
{Z_w[m,r]} =  \abs{\left<\f_m,\W_r\x\right>}^2+ z[m,r] \nonumber
\end{equation} 
for $1 \leq m \leq M$ and $0 \leq r \leq R-1$, where $\z_r = (z[0,r], z[1,r] , \ldots , z[M-1,r])^T$ is the additive noise corresponding to the $r${th} short-time section and $4L \leq M \leq N$. STliFT, in the noisy setting, can be implemented as follows: Suppose $\| \z_r \|_2 \leq \eta$ for all $0 \leq r \leq R-1$. The constraints in the convex program (\ref{STFTPRR}) can be replaced by
\begin{equation}
{\sum_{m=1}^{M}\Big( Z_w[ m , r ] -  \trace(\W_r^\star\f_m \f_m^\star\W_r\X) \Big)^2} \leq \eta^2
\end{equation}
for $0 \leq r \leq R-1$. We recommend the use of trace minimization as the objective function. Numerical simulations strongly suggest that STliFT can recover most non-vanishing signals stably in the noisy setting under certain conditions. The details of the simulations are provided in the following section.

\section{Numerical Simulations}

In this section, we demonstrate the empirical abilities of STliFT using numerical simulations. 

In the first set of simulations, we evaluate the performance of STliFT as a function of window and shift lengths. We choose $N=32$, and vary $\{L,W\}$. For each choice of $\{L,W\}$, we consider $M = 4L$ phaseless measurements and perform $100$ trials. In every trial, we choose a random signal such that the values in each location are drawn from an i.i.d. standard complex normal distribution. We select the window $\w$ such that $w[n]=1$ for all $0 \leq n \leq W-1$. The probability of successful recovery as a function of $\{L,W\}$ is plotted in Fig. \ref{fig:successprobabilitystft}. 

Observe that STliFT successfully recovers the underlying signal with very high probability when $2L \leq W \leq \frac{N}{2}$ and fails with very high probability when $2L > W$. The choice of $\{L, W\}= \{ \frac{N}{4}, \frac{N}{2}\}$ uses only {\em six} short-time sections and STliFT recovers the underlying signal with very high probability, which, given the limited success of semidefinite relaxation-based algorithms in the Fourier phase retrieval setup, is very encouraging.

In the second set of simulations, we evaluate the performance of STliFT as a function of shift length and measurements per short-time section. We choose $N=32$ and $W = 16$, and vary $\{L,M\}$. For each choice of $\{L,M\}$, we perform $100$ trials as before. The probability of successful recovery as a function of $\{L,M\}$ is plotted in Fig. \ref{fig:successprobabilitystftsr}.  Observe that recovery is successful even in the $4L \leq M <2W$ regime.

\begin{figure}
\begin{center}
\includegraphics[scale = 0.41]{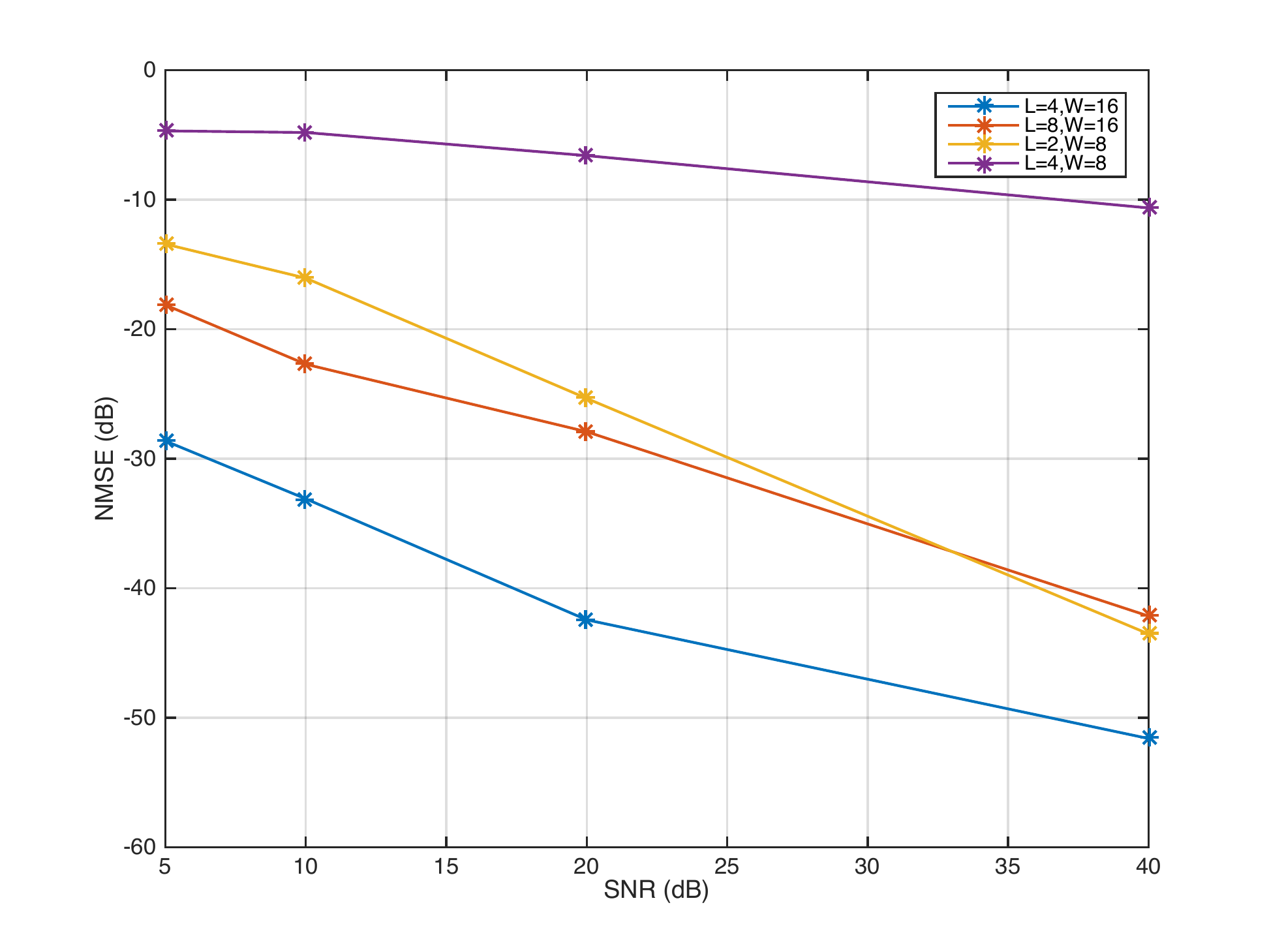} 
\end{center}
\caption{NMSE (dB) vs SNR (dB) using STliFT in the noisy setting for $N=32$, $M=2W$.}
\label{fig:errorstft}
\end{figure}

In the third set of simulations, we evaluate the performance of STliFT in the noisy setting. We choose $M=2W$, the rest of the parameters are the same as the first set of simulations. The normalized mean-squared error, given by
\begin{equation}
NMSE = \min\limits_{\abs{c} = 1 } \frac{ ||\x_0 - c \hat{\x}||^2 } { || \x_0 ||^2 },
\end{equation}
is plotted as a function of SNR in Fig. \ref{fig:errorstft}. The linear relationship between them shows that STliFT stably recovers the underlying signal in the presence of noise. Also, it can be observed that the choices of $\{W,L\}$ which correspond to significant overlap between adjacent short-time sections tend to recover signals more stably compared to values of $\{W,L\}$ which correspond to less overlap, which is not surprising.


\section{Conclusions and Future Directions}

In this work, we considered the STFT phase retrieval problem. We showed that, if $L < W \leq \frac{N}{2}$, then almost all non-vanishing signals can be uniquely identified from their STFT magnitude (up to a global phase), and extended this result to incorporate sparse signals which have less than $\min\{W-L,L\}$ consecutive zeros. 

For $2L \leq W \leq \frac{N}{2}$, we conjectured that most non-vanishing signals can be recovered (up to a global phase) by a semidefinite relaxation-based algorithm (STliFT). When $W=o(N)$, through super-resolution, we reduced the number of phaseless measurements to $(4 + o(1))N$. We proved this conjecture for the setup in which the first $\big\lfloor \frac{L}{2} + 1\big\rfloor$ samples are known, and for the case in which $L=1$. We argued that the additional assumptions are asymptotically reasonable when $W\ll N$, which is typically the case in practical methods. We then extended these results to incorporate sparse signals which have at most $W-2L$ consecutive zeros.

Natural directions for future study include a proof of this conjecture without the additional assumptions, and a stability analysis in the noisy setting. Also, a thorough analysis of the phase transition at $2L=W$ will provide a more complete characterization of STliFT.

{\bf Acknowledgements}: We would like to thank Mordechai Segev and Oren Cohen for introducing us to the STFT phase retrieval problem, and for many insightful discussions.



\section*{Appendix}

\section{Equivalent definition of stft phase retrieval}

\label{appD}

{\edit Since we consider $N$ point DFT and $W$ satisfies $W \leq \frac{N}{2}$, STFT phase retrieval can be equivalently stated in terms of the short-time autocorrelation $\A_w$ \cite{hofstetter}: 
\begin{align}
\label{STFTPRa}
&\textrm{find} \hspace{1.6cm} \x  \\
\nonumber & \textrm{subject to} \\
\nonumber & a_w[ m , r ] = { \sum_{n = 0}^{N-1-m} x[n] w[ rL - n ] x^\star[n+m]w^\star[rL-(n+m)]}
\end{align}
for $0 \leq m \leq N - 1$ and $0 \leq r \leq R-1$.

The knowledge of the short-time autocorrelation is sufficient for all the guarantees provided in this paper. Note that the $r${th} column of $\Z_w$ and the $r${th} column of $\aaa_w$ are Fourier pairs. Hence, for a particular $r$, if $Z_w[m,r]$ for $0 \leq m \leq N-1$ is available, then $a_w[m,r]$ for $0 \leq m \leq N-1$ can be calculated  by taking an inverse Fourier transform. The following lemma shows that $2W$ phaseless measurements per short-time section are sufficient to infer the short-time autocorrelation.

\begin{lem}
$Z_w[m,r]$ for $1 \leq m \leq 2W-1$ is sufficient to calculate $a_w[m,r]$ for $0 \leq m \leq N-1$. 
\end{lem}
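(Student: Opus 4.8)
The plan is to exploit the strong localization of the short-time autocorrelation $a_w[\cdot,r]$: although it is a length-$N$ sequence, it has at most $2W-1$ nonzero entries, so the entire sequence is pinned down by only $2W-1$ of its Fourier coefficients. First I would record the support of the windowed signal $g_r[n] := x[n]\,w[rL-n]$, which is nonzero only when $0 \leq rL-n \leq W-1$, i.e.\ on the interval $\{rL-W+1,\ldots,rL\}$ of length $W$. In the defining sum for $a_w[m,r]$ both indices $n$ and $n+m$ must lie in this interval, and since two points of an interval of length $W$ differ by at most $W-1$, this forces $a_w[m,r]=0$ whenever $|m|\geq W$. Viewed circularly over $\{0,1,\ldots,N-1\}$, the autocorrelation is therefore supported only on the $2W-1$ lags $\{0,1,\ldots,W-1\}\cup\{N-W+1,\ldots,N-1\}$, and it satisfies the conjugate symmetry $a_w[N-k,r]=a_w^\star[k,r]$.

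Next, since the $r$th columns of $\Z_w$ and $\aaa_w$ form a DFT pair, I would use the support just established to write
\[
Z_w[m,r] = \sum_{k=-(W-1)}^{W-1} a_w[k,r]\, e^{-i 2\pi m k / N},
\]
with $a_w[-k,r]$ interpreted as $a_w[N-k,r]$. The key step is to read the right-hand side as a Laurent polynomial of degree $W-1$ in the variable $z_m := e^{-i 2\pi m/N}$. Multiplying through by $z_m^{W-1}$ turns it into an ordinary polynomial $Q(z_m) := z_m^{W-1} Z_w[m,r]$ of degree at most $2W-2$, whose $2W-1$ coefficients are precisely the unknown values $\{a_w[k,r]\}_{k=-(W-1)}^{W-1}$.

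Finally, I would invoke polynomial interpolation. Because $W \leq \frac{N}{2}$, the indices $m=1,2,\ldots,2W-1$ satisfy $2W-1 \leq N-1$, so the nodes $z_1,\ldots,z_{2W-1}$ are $2W-1$ \emph{distinct} $N$th roots of unity. A polynomial of degree at most $2W-2$ is uniquely determined by its values at $2W-1$ distinct nodes, since the associated Vandermonde system is invertible, and the value of $Q$ at each node $z_m$ is known from the measurement $Z_w[m,r]$. Solving this linear system recovers every coefficient $a_w[k,r]$ for $-(W-1)\leq k\leq W-1$; the remaining lags vanish, so $a_w[m,r]$ is determined for all $0 \leq m \leq N-1$, exactly as claimed.

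I expect the only real subtlety to be bookkeeping rather than genuine difficulty: one must correctly translate the support of the \emph{linear} autocorrelation into the \emph{circular} (length-$N$) indexing, and confirm that the hypothesis $2W\leq N$ both rules out aliasing and guarantees that the $2W-1$ chosen Fourier samples land on distinct interpolation nodes. Once localization and node-distinctness are in place, invertibility of the Vandermonde system closes the argument immediately.
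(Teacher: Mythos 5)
Your proposal is correct and is essentially the paper's own argument: both exploit that the short-time autocorrelation has only $2W-1$ nonzero circular lags and then invert a $(2W-1)\times(2W-1)$ Vandermonde submatrix of the DFT matrix. Your multiplication by $z_m^{W-1}$ to turn the Laurent polynomial into an ordinary one is exactly the paper's circular shift of $\aaa_w$ by $W-1$ rows, viewed in the Fourier domain.
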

\begin{proof}
If the window length is $W$, then $\aaa_w$ has non-zero values only in the interval $0 \leq m \leq W-1$ and $N - W + 1 \leq m \leq N-1$. Let $\bbb_w$ be the signal obtained by circularly shifting $\aaa_w$ by $W-1$ rows, so that $\bbb_w$ has non-zero values only in the interval $0 \leq m \leq 2W-2$. Since the submatrix of the $N$ point DFT matrix obtained by considering the first $2W-1$ columns and any $2W-1$ rows is invertible (the Vandermonde structure is retained), $Z_w[m,r]$ for $1 \leq m \leq 2W-1$ and $b_w[m,r]$ for  $0 \leq m \leq 2W-2$ are related by an invertible matrix. Note that $a_w[m,r]$ for $0 \leq m \leq N-1$ can be trivially calculated from $b_w[m,r]$ for  $0 \leq m \leq 2W-2$.
\end{proof}

Consequently, if the $N$ point DFT is used and $2W \leq M \leq N$ is satisfied, the affine constraints in (\ref{STFTPRR}) can be rewritten in terms of $\aaa_w$ and $\X$ as:
\begin{align}
\nonumber & a_w[ m , r ] = { \sum_{n = 0}^{N-1-m} X[n,n+m] w[ rL - n ]w^\star[rL-(n+m)] }. \nonumber
\end{align}}

\section{Proof of Theorem \ref{STFTUN}}

\label{appA}

The symbol $\equiv$ is used to denote equality up to a global phase and time-shift\footnote[4]{For non-vanishing signals, there is no ambiguity due to time-shift.}. We say that two signals $\x_1$ and $\x_2$ are distinct if $\x_1 \not\equiv \x_2$, and equivalent if $\x_1\equiv \x_2$. 

{\edit 
Let $\mathcal{P}$ denote the set of all distinct non-vanishing complex signals of length $N$. $\mathcal{P}$ is a manifold of dimension $2N-1$, i.e., $\mathcal{P}$ locally resembles a real $2N-1$ dimensional space. This can be seen as follows: In order to discard the global phase of non-vanishing signals, we can assume that $x[n_0]$ is real and positive at  one index $n_0$, without loss of generality. Hence, $x[n_0]$ can take any value in $\mathbb{R}_+$, and $x[n]$, for each $0 \leq n \leq N-1$ not equal to $n_0$, can take any value in $\mathbb{R}^2 \backslash \{0,0\}$, due to the one-to-one correspondence between $\mathbb{C}$ and $\mathbb{R}^2$.

Let $\mathcal{P}_c \subset \mathcal{P}$ be the set of distinct non-vanishing complex signals which cannot be uniquely identified from their STFT magnitude if $\w$ is chosen such that it is non-vanishing and $W \leq \frac{N}{2}$. We show that $\mathcal{P}_c$ has measure zero in $\mathcal{P}$. In order to do so, our strategy is as follows: 

We first characterize $\mathcal{P}_c$ using Lemma \ref{stftproplem}. In particular, we show that $\mathcal{P}_c$ is a finite union of images of continuously differentiable maps from $\mathbb{R}^{2N-2}$ to $\mathcal{P}$. Since $\mathcal{P}$ is a manifold of dimension $2N-1$, the following result completes the proof:
\begin{thm}[\cite{ortega}, Chapter 5]
\label{dimlem}
If $f: \mathbb{R}^{N_0} \rightarrow \mathbb{R}^{N_1}$ is a continuously differentiable map, then the image of $f$ has measure zero in $\mathbb{R}^{N_1}$, provided $N_0<N_1$.
\end{thm}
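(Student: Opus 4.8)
The plan is to prove Theorem~\ref{dimlem} by a covering argument that converts the dimension gap $N_0<N_1$ into a quantitative volume estimate. Since $\mathbb{R}^{N_0}$ can be written as a countable union of closed cubes, and a countable union of Lebesgue-null sets is null, it suffices to show that $f(Q)$ has measure zero in $\mathbb{R}^{N_1}$ for an arbitrary closed cube $Q\subset\mathbb{R}^{N_0}$, say of side length $\ell$. The conceptual engine is that a continuously differentiable map is locally Lipschitz, and a Lipschitz map cannot raise dimension.

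First I would fix $Q$ and exploit the $C^1$ hypothesis to obtain a uniform Lipschitz bound. Because $Df$ is continuous, it is bounded on the compact set $Q$, say $\|Df(x)\|\le K$ for all $x\in Q$; integrating $Df$ along the segment joining any two points of the convex set $Q$ then yields $\|f(x)-f(y)\|\le K\,\|x-y\|$. Thus $f$ is $K$-Lipschitz on $Q$.

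Next comes the counting step. Partition $Q$ into $m^{N_0}$ congruent subcubes of side $\ell/m$, each of diameter $\sqrt{N_0}\,\ell/m$. By the Lipschitz bound, the image of each subcube has diameter at most $K\sqrt{N_0}\,\ell/m$, hence lies in a ball of that radius whose $N_1$-dimensional volume is at most $C_{N_1}\,(K\sqrt{N_0}\,\ell/m)^{N_1}$ for a dimensional constant $C_{N_1}$. Summing over all subcubes bounds the outer measure of $f(Q)$ by $m^{N_0}\cdot C_{N_1}(K\sqrt{N_0}\,\ell)^{N_1}\,m^{-N_1}=C'\,m^{\,N_0-N_1}$.

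The decisive point, and the only place the hypothesis enters, is that $N_0-N_1<0$: the geometric decay $m^{-N_1}$ of each image cell strictly beats the combinatorial growth $m^{N_0}$ in the number of cells, so letting $m\to\infty$ forces $|f(Q)|=0$; a countable union over a tiling of $\mathbb{R}^{N_0}$ then finishes the proof. I expect the main (and essentially only) subtlety to be the uniform Lipschitz estimate on $Q$: mere differentiability would not suffice, and it is precisely continuity of the derivative together with compactness of $Q$ that supplies the constant $K$. If instead $N_0=N_1$ the bound degrades to $O(1)$ and the conclusion genuinely fails, since even a diffeomorphism has an image of positive measure; so the strict inequality is not a technical artifact but the heart of the statement.
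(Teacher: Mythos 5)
Your proof is correct, but note that the paper does not actually prove this statement at all: Theorem~\ref{dimlem} is quoted as a known result, attributed to \cite{ortega} (Chapter 5), and is used as a black box inside the dimension-counting proof of Theorem~\ref{STFTUN}. What you have written is a complete, self-contained proof of that cited fact, and it is the standard one: reduce to a closed cube by countable additivity, extract a uniform Lipschitz constant from continuity of $Df$ on the compact convex cube, subdivide into $m^{N_0}$ subcubes, and observe that the total volume of the images is $O\bigl(m^{N_0-N_1}\bigr)\to 0$. All the steps check out, including the two points where care is needed --- the mean-value inequality along segments (which is where $C^1$, rather than mere differentiability, is used) and the fact that outer measure zero suffices since Lebesgue measure is complete (alternatively, each $f(Q)$ is compact, hence measurable). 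Your closing remark that the strict inequality $N_0<N_1$ is essential is also apt, and mirrors why the paper needs the image sets $\mathcal{Q}_c^{rl_rl_{r+1}}$ to be parametrized by only $2N-2$ real variables inside the $(2N-1)$-dimensional manifold $\mathcal{P}$.
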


}

We use the following notation in this section: If $\g$ is a signal of length $l_g$, then $\g = ( g[0], g[1], \ldots , g[l_g-1] )^T$ such that $\{g[0], g[l_g-1]\} \neq 0$ and $g[n]=0$ outside the interval $[0, l_g-1]$. The vector $\tilde{\g}$ denotes the conjugate-flipped version of $\g$, i.e., $\tilde{\g} = ( g^\star[l_g-1], g^\star[l_g-2], \ldots , g^\star[0] )^T$. Let $u_r$ and $v_r$ denote the smallest and largest index  where the windowed signal $\x \circ \w_r$ has a non-zero value respectively.

\begin{lem}
Consider two signals $\x_1 \not \equiv \x_2$ of length $N$ which have the same STFT magnitude.  If the window $\w$ is chosen such that it is non-vanishing and $W \leq \frac{N}{2}$, then, for each $r$, there exists signals $\g_r$ and $\h_r$, of lengths $l_{gr}$ and $l_{hr}$ respectively, such that 
\begin{enumerate}[(i)]
\item $ \x_1 \circ {\w}_r \equiv \g_r \star \h_r, ~~\x_2 \circ {\w}_{r} \equiv \g_r \star \tilde{\h}_r$
\item $l_{gr} + l_{hr} - 1 = v_r - u_r + 1$
\item $g_r[l_{gr}-1] = 1$ and $\{ g_r[0] , h_r[0], h_r[l_{hr}-1] \}\neq 0$
\end{enumerate}
\label{stftproplem}
where $\star$ is the convolution operator. Further, there exists at least one $r$ such that
\begin{enumerate}
\item[(iv)] $l_{hr} \geq 2$, $h_r[0]$ is real and positive. 
\end{enumerate}
\end{lem}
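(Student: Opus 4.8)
The plan is to reduce the problem, section by section, to ordinary (Fourier) phase retrieval and then invoke the classical factorization characterization of Fourier-magnitude ambiguities.

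First I would observe that the $r$th column of the STFT magnitude is precisely the $N$-point DFT magnitude of the windowed signal $\x \circ \w_r$, so that $\x_1$ and $\x_2$ having the same STFT magnitude means $\x_1 \circ \w_r$ and $\x_2 \circ \w_r$ have the same $N$-point DFT magnitude for every $r$. Since the window has length $W \leq \frac{N}{2}$, each windowed signal is supported on an interval of length at most $W$, so its linear autocorrelation has support of length at most $2W-1 \leq N-1$ and is therefore read off from the $N$-point measurements without aliasing. Consequently $\x_1 \circ \w_r$ and $\x_2 \circ \w_r$ share an autocorrelation, i.e. an honest Fourier magnitude, for each $r$.

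Next I would apply the classical z-transform characterization of Fourier phase retrieval ambiguities from \cite{hofstetter}: two finite-length signals share an autocorrelation exactly when their z-transforms share a common polynomial factor while the complementary factors are conjugate-reversals of one another. Writing the common factor as $\g_r$ and the complementary factor of $\x_1 \circ \w_r$ as $\h_r$ (so the complementary factor of $\x_2 \circ \w_r$ is $\tilde{\h}_r$) gives (i). Property (ii) then follows from additivity of degrees under convolution together with the fact that $\x \circ \w_r$ is supported exactly on $[u_r, v_r]$: since both the signal and the window are non-vanishing, the support of a section coincides with that of $\w_r$. Property (iii) is pure normalization: I would fix the scale and phase of $\g_r$ by setting its leading coefficient $g_r[l_{gr}-1]=1$, and the nonvanishing of $g_r[0], h_r[0], h_r[l_{hr}-1]$ is guaranteed by the convention that the first and last entries of a length-$l$ signal are nonzero.

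The hard part will be (iv). I would first note that $l_{hr}=1$ holds exactly when $\h_r$ is a scalar, i.e. when $\x_1 \circ \w_r$ and $\x_2 \circ \w_r$ differ only by a global phase (their supports coincide, so no nontrivial time-shift is possible). Suppose, for contradiction, that $l_{hr}=1$ for every $r$, so that $\x_1 \circ \w_r = c_r\,(\x_2 \circ \w_r)$ for unit scalars $c_r$. Using the overlap of adjacent sections (guaranteed by $L<W$ in the ambient theorem) together with the non-vanishing of both signal and window, at any index shared by sections $r$ and $r+1$ the two relations force $c_r=c_{r+1}$; chaining across the overlapping sections, which together cover all of $[0,N-1]$, makes every $c_r$ equal to a single constant $c$, whence $\x_1 = c\,\x_2$ and $\x_1\equiv\x_2$, contradicting the hypothesis. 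Hence $l_{hr}\geq 2$ for at least one $r$, and for that $r$ I would use the residual global-phase freedom permitted by $\equiv$ to rescale $\h_r$ so that $h_r[0]$ is real and positive. The main obstacle is precisely this phase-chaining step: it is where the non-vanishing hypotheses and the section overlap are genuinely needed, and one must check carefully that the overlapping sections link every coordinate so the single phase $c$ propagates across the whole signal.
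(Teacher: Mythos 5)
Your proposal is correct and follows essentially the same route as the paper: each STFT column is reduced to a classical Fourier phase-retrieval/autocorrelation instance (no aliasing since $W \leq \frac{N}{2}$), and the common-factor/conjugate-flip factorization of autocorrelation ambiguities --- which the paper imports as Lemma 7.1 of \cite{kishorej} and you attribute to the classical z-transform characterization \cite{hofstetter} --- yields (i)--(iii) with the same normalization conventions. Your phase-chaining argument for (iv) is precisely the content the paper compresses into the one-line assertion that $l_{hr}=1$ for all $r$ forces $\x_1 \equiv \x_2$; your explicit appeal to the section overlap ($L<W$) and the non-vanishing assumptions inherited from Theorem \ref{STFTUN} spells out what the paper's terse proof leaves implicit.
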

\begin{proof}
In Lemma $7.1$ of \cite{kishorej}, it is shown that if two non-equivalent signals of length $N$ have the same Fourier magnitude and if the DFT dimension is at least $2N$ (this would imply that they have the same autocorrelation), then there exists signals $\g$ and $\h$, of lengths $l_g$ and $l_h$  respectively, such that one signal can be decomposed as $\g \star \h$ and the other signal can be decomposed as $\g \star \tilde{\h}$. 
{\edit For each $r$, the $r$th column of the STFT magnitude is equivalent to the Fourier magnitude of the windowed signal $\x \circ \w_r$. The DFT dimension is $N$, and the windowed signal length is  $v_r - u_r + 1$ (which is less than or equal to $\frac{N}{2}$). Since, for every $r$, $\x_1 \circ \w_r$ and $\x_2 \circ \w_r$ have the same Fourier magnitude, the aforementioned result proves (i).}

The conditions (ii) and (iii) are properties of convolution (see Lemma $7.1$ of \cite{kishorej} for details), and therefore hold for every $r$. 

Furthermore, if $l_{hr} = 1$ for all $0 \leq r \leq R-1$, then $\x_1 \equiv \x_2$. Hence, $l_{hr} \geq 2$ for at least one $r$. For this $r$, since $e^{i\phi_1}\x_1$ and $e^{i\phi_2}\x_2$ have the same STFT magnitude, $h_r[0]$ can be assumed to be real and positive without loss of generality. Hence, (iv) holds for at least one $r$.
\end{proof}


{\edit 

Consequently, for each $\x \in \mathcal{P}_c$, condition (iv) of Lemma \ref{stftproplem} holds for at least one $r$. Let $\mathcal{P}_{c}^{rl_rl_{r+1}} \subset \mathcal{P}_c$ denote the set of signals for which $l_{hr} = l_r \geq 2$ and $l_{h,r+1} = l_{r+1}$\footnote[2]{When $r=R$, we consider the short-time section $r-1$ instead of $r+1$. We show the detailed calculations for the case when short-time section $r + 1$ is considered, the arguments are symmetric for $r-1$.}. It suffices to show that for each $r$, $l_r$ and $l_{r+1}$, there exists a set $\mathcal{Q}_{c}^{rl_rl_{r+1}} \supseteq \mathcal{P}_{c}^{rl_rl_{r+1}}$, which is the image of a continuously differentiable map from $\mathbb{R}^{2N-2}$ to $\mathcal{P}$. 

We first show the arguments for the $L=W-1$ case as the expressions are simple and provide intuition for the technique. Then, we show the arguments for the $L < W-1$ case. 

${(i) ~ L = W-1:}$



The set $\mathcal{Q}_{c}^{rl_rl_{r+1}}$ is constructed as follows: Consider the variables $\{\g_r, \h_r, \g_{r+1}, \h_{r+1}\}$ satisfying $l_{hr}=l_r \geq 2$ and $l_{h,r+1} = l_{r+1}$, and $x[n]$ for $n \in [0, u_r) \cup (v_{r+1},N-1]$. The map $f = (f_0, f_1 , \ldots , f_{N-1})^T$ from these variables to $\mathcal{P}$ is the following: 
\begin{equation}
f_n = \begin{cases}
x[n]  \for n \in [0, u_r) \cup (v_{r+1},N-1] \\
\sum_{m=0}^{n-u_{r}} g_{r}[m]h_{r}[n-u_{r}-m] \\
\quad \quad  \for n \in [u_{r}, v_{r}] \\
\sum_{m=0}^{n-u_{r+1}} g_{r+1}[m]h_{r+1}[n-u_{r+1}-m] \\
\quad \quad  \for n \in [u_{r+1}, v_{r+1}].
\end{cases}
\end{equation}

Observe that, for $n=u_{r+1}=v_r$, $f_n$ has two definitions. The variables can admit only those values for which the two definitions have the same value. In the following, we show that there is a one-to-one correspondence between the set of admissible values of the variables and a subset of  $\mathbb{R}^{2N-2}$.   

Each $x[n]$, for $n \in [0, u_r) \cup (v_{r+1},N-1]$, can be chosen from $\subset \mathbb{R}^{2}$. The set of $\{\g_{r+1}, \h_{r+1}\}$ is a subset of $\mathbb{R}^{2 (v_{r+1} - u_{r+1} + 1)}$, which can be seen as follows: $g_{r+1}[l_{g,r+1}-1] = 1$ is fixed (see Lemma \ref{stftproplem}), there are $v_{r+1} - u_{r+1} + 1$ other terms and each can be chosen from $\subseteq \mathbb{R}^2$. 

For each choice of $\{\g_{r+1}, \h_{r+1}\}$, consider the set of $\{\g_{r}, \h_{r}\}$ excluding the terms $h_r[0]$ and $h_r[l_{hr}-1]$: $g_{r}[l_{gr}-1] = 1$ is fixed, there are $v_{r} - u_{r}-1$ other terms and each can be chosen from $\subseteq \mathbb{R}^2$. Hence, this set is a subset of $\mathbb{R}^{2 (v_{r} - u_{r} - 1)}$.

Since the short-time sections $r$ and $r+1$ overlap in the index $v_r$, $\g_r \star \h_r$ and $\g_{r+1} \star \h_{r+1}$  must be consistent in this index, i.e., $\{\g_r, \h_r\}$ must satisfy:
\begin{equation}
\frac{1}{w[0]} h_r[l_{hr}-1] = \frac{1}{w[W-1]}{g_{r+1}[0]}h_{r+1}[0] \label{constraint1}.
\end{equation}
Due to Lemma \ref{stftproplem}, $ \g_r \star \tilde{\h}_r$ and $ \g_{r+1} \star \tilde{\h}_{r+1}$ must also be consistent in this index up to a phase, i.e., $\{\g_r, \h_r\}$ must also satisfy:
\begin{equation}
h_r[0] \equiv \frac{w[0]}{w[W-1]}{g_{r+1}[0]h_{r+1}^\star[l_{h,r+1}-1]} \label{constraint2}.
\end{equation}
Observe that $\equiv$ is used in (\ref{constraint2}), due to the fact that the equality is only up to a phase. However, $h_r[0]$ is real and positive (see Lemma \ref{stftproplem}), due to which (\ref{constraint2}) fixes $h_r[0]$.

Consequently, the set of admissible values of the variables, excluding $h_r[0]$ and $h_r[l_{hr}-1]$, is a subset of $\mathbb{R}^{2N-2}$, as $2( N - v_{r+1} + u_r - 1 + v_{r+1} - u_{r+1} + 1 +  v_{r} - u_{r} - 1) = 2N-2$.  For each point in this set, $h_r[0]$ and $h_r[l_{hr}-1]$ are uniquely determined. It is straightforward to check that the map $f$ from this set to $\mathcal{P}$ is continuously differentiable. Consequently, $\mathcal{Q}_{c}^{rl_rl_{r+1}}$ is the image of a continuously differentiable map from $\mathbb{R}^{2N-2}$ to $\mathcal{P}$.  



}

${(ii) ~ L < W-1:}$




{\edit Consider the setup for which $2L \geq W$. The set of $\{\g_{r+1}, \h_{r+1}\}$, as earlier, is a subset of $\mathbb{R}^{2 (v_{r+1} - u_{r+1} + 1)}$.}

The short-time sections $r$ and $r+1$ overlap in the interval $[u_{r+1}, v_r]$. Let $v_r - u_{r+1} + 1 = T$ (the number of indices in the overlapping interval). Due to $2L \geq W$, we have $T = W-L \leq \lfloor{\frac{W}{2}\rfloor}$. Hence, for each choice of $\{\g_{r+1}, \h_{r+1}\}$, $\{\g_r, \h_r\}$ must satisfy:
\begin{align}
& \sum_{m=0}^{n+u_{r+1}-u_{r}} \frac{1}{w_r[u_{r}+m]} g_r[m]h_r[n+u_{r+1}-u_{r}-m] \label{constraint3} \\
& \hspace{2cm} = \sum_{m=0}^{n} \frac{1}{w_{r+1}[u_{r+1} + m]}{g_{r+1}[m]}h_{r+1}[n-m]  \nonumber
\end{align}
for $0 \leq n \leq T-1$. In addition, $\{\g_r, \h_r\}$ must also satisfy:
\begin{equation}
\frac{1}{w[0]} h_r[0] \equiv \sum_{m=0}^{T-1} \frac{1}{w_{r+1}[u_{r+1} + m]}{g_{r+1}[m]}h_{r+1}[T-1-m]\label{constraint4}
\end{equation}

{\edit If $l_{hr} \geq \lfloor{\frac{W}{2}\rfloor}+1$, then the $T$ bilinear equations (\ref{constraint3}) can be written as $\GG \h_r = \cc$, where $\GG$ has upper triangular structure with unit diagonal entries, due to which $\rank(\GG) = T$. The set of $\g_r$ is a subset of $\mathbb{R}^{2(l_{gr}-1)}$. For each choice of $\g_r$, the terms $\{h_r[l_{hr}-T],\ldots ,h_r[l_{hr}-1]\}$ are fixed by $\GG \h_r = \cc$. The constraint (\ref{constraint4}) fixes the value of $h_r[0]$, as earlier. Each of the remaining $(l_{hr}-1-T)$ terms of $\h_r$ may be chosen from $\subseteq \mathbb{R}^{2}$. 

Hence, the set of admissible values of the variables,  excluding $\{h_r[l_{hr}-T], h_r[l_{hr}-T+1],\ldots ,h_r[l_{hr}-1]\}$ and $h_r[0]$, is a subset of $\mathbb{R}^{2N-2}$, due to the fact that ${2 (N - v_{r+1} + u_r - 1 + v_{r+1} - u_{r+1} + 1 + v_{r} - u_{r} - T )}=2N-2$ (as $l_{gr}+l_{hr}-1 = v_r - u_r + 1$). For each point in this set, $h_r[0]$ and $\{h_r[l_{hr}-T],\ldots ,h_r[l_{hr}-1]\}$ are uniquely determined. The rest of the arguments are identical to those of $L=W-1$.}

If $l_{gr} \geq \lfloor{\frac{W}{2}\rfloor}+1$ instead, then the bilinear equations (\ref{constraint3}) can be equivalently written as $\HH\g_r = \cc$, the same arguments may be applied to draw the same conclusion. For the setup with $2L > W$, the same arguments hold for the short-time sections $r$ and $r+t$, where $t$ is the largest integer such that the short-time sections $r$ and $r+t$ overlap (this ensures $T \leq \lfloor{\frac{W}{2}\rfloor}$). 

   


\section{Proof of Corollary \ref{STFTUS}}

\label{appA2}

We now extend Theorem \ref{STFTUN} to incorporate sparse signals. Let $\mathcal{P}^{S}$ denote the set of all distinct complex signals of length $N$ with a support $S$. Here, $S$ is a binary vector of length $N$, such that $x[n] \neq 0$ whenever $S[n]=1$ and  $x[n] = 0$ whenever $S[n]=0$. Further, $S$ has less than $\min\{L,W-L\}$ consecutive zeros.


Let $\mathcal{P}_c^S \subset \mathcal{P}^S$ denote the set of signals which cannot be uniquely identified from their STFT magnitude if  $\w$ is chosen such that it is non-vanishing and $W \leq \frac{N}{2}$. We show that $\mathcal{P}_c^S$ has measure zero in $\mathcal{P}^S$.

In the proof of Theorem \ref{STFTUN}, in order to show dimension reduction, we used the fact that for sufficient pairs of adjacent short-time sections $r$ and $r+1$, the following holds: 

(i) There is at least one index in the non-overlapping indices $[u_{r}, u_{r+1}-1]$ or $[v_{r}+1, v_{r+1}]$ where the signals $\x_1$ and $\x_2$ have a non-zero value. This ensures that $h_r[0]$ is not constrained by $\{\g_{r+1}, \h_{r+1}\}$ in general. This condition can be ensured by imposing the constraint that the sparse signal cannot have $L$ consecutive zeros.

(ii) There is at least one index in the overlapping indices $[u_{r+1}, v_r]$ where the signals $\x_1$ and $\x_2$ have a non-zero value. This ensures that $h_r[0]$ is constrained by $\{\g_{r+1}, \h_{r+1}\}$ (\ref{constraint2}) for signals which cannot be uniquely identified by their STFT magnitude. This condition can be ensured by imposing the constraint that the sparse signal cannot have $W-L$ consecutive zeros. 

The only difference in the proof is the following: Unlike in the case of non-vanishing signals, there is time-shift ambiguity. Hence, the constraint (\ref{constraint4}) is replaced by:
\begin{equation}
\frac{1}{w[0]} h_r[0] \equiv \sum_{m=0}^{n} \frac{1}{w_{r+1}[u_{r+1} + m]}{g_{r+1}[m]}h_{r+1}[n-m]\label{constraint6}
\end{equation} 
for {\em some} $0 \leq n \leq  T-1$. This fixes the value of $h_r[0]$ to one of at most $T$ values, due to which there is a dimension reduction. 

\section{Proof of Theorem \ref{lhalfthm}}

\label{appC}

We first show the arguments for the case $2W \leq M \leq N$ (short-time autocorrelation known) as the expressions are simple and provide intuition. Then, we show the arguments for the case $4L \leq M < 2W$ (super-resolution).

${(i) ~ 2W \leq M \leq N:}$

The affine constraints in (\ref{STFTPRR}) can be rewritten as (see Section \ref{appD}):
\begin{align}
\nonumber & a_w[ m , r ] = { \sum_{n = 0}^{N-1-m} X[n,n+m] w[ rL - n ]w^\star[rL-(n+m)] }. \nonumber
\end{align}

The proof strategy is as follows: We begin by focusing our attention on short-time section $r=1$. We show that the prior information available, along with the affine autocorrelation measurements corresponding to $r=1$ and the positive semidefinite constraint, will ensure that every feasible matrix of (\ref{STFTPRR}) satisfies $X[n,m]=x_0[n]x_0^\star[m]$ for $0 \leq n , m \leq L$. We then apply this argument incrementally, i.e., we show that the affine measurements corresponding to short-time section $r$, along with the entries of $\X$ uniquely determined  and the positive semidefinite constraint, will ensure that $X[n,m]=x_0[n]x_0^\star[m]$ for $u_r \leq n , m \leq v_r$, where $u_r$ and $v_r$ denote the smallest and largest index where $\w_r$ has a non-zero value respectively. Consequently, the entries along the diagonal and the first $W-L$ off-diagonals of every feasible matrix of (\ref{STFTPRR}) match the entries along the diagonal and the first $W-L$ off-diagonals of the matrix $\x_0\x_0^\star$. Since the entries are sampled from a rank one matrix with non-zero diagonal entries (i.e., $\x_0\x_0^\star$), there is exactly one positive semidefinite completion, which is the rank one completion $\x_0\x_0^\star$ \cite{horn}.

Let $\s_0 = (x_0[0], x_0[1], \ldots, x_0[L])^T$ be a length $L+1$ subsignal of $\x_0$, and $\SSSS$ be the $(L+1) \times (L+1)$ submatrix of $\X$  corresponding to the first $L+1$ rows and columns. We now show that $\SSSS = \s_0 \s_0 ^\star$ is the only feasible matrix under the constraints of (\ref{STFTPRR}).

Since $x_0[n]$ for $0 \leq n \leq \big\lfloor{\frac{L}{2}\big\rfloor}$ is known a priori, we have $S[n,m] = x_0[n]x_0^\star[m]$ for  $0 \leq n,m \leq \big\lfloor{\frac{L}{2}\big\rfloor}$. Let $\mathcal{A}(\SSSS) = \cc$ denote these constraints due to prior information, along with the affine constraints corresponding to $r=1$. In particular, $\mathcal{A}(\SSSS) = \cc$ denotes the following set of constraints:
\begin{align}
& S[n,m] = x_0[n]x_0^\star[m] \quad \textrm{for} \quad 0 \leq n,m \leq \Big\lfloor{\frac{L}{2}\Big\rfloor}, \nonumber \\
& a_w[ m,1 ] = \sum_{n = 0}^{L-m} S[n,n+m]w[L - n ] w^\star[L-(n+m)] \nonumber.
\end{align}

For each feasible matrix $\SSSS$, these set of measurements fix (i) the $\big\lfloor{\frac{L}{2}+1\big\rfloor \times \big\lfloor\frac{L}{2}+1\big\rfloor}$ submatrix, corresponding to the first  $\big\lfloor{\frac{L}{2}+1\big\rfloor}$ rows and columns, of $\SSSS$ (ii) the appropriately weighted sum along the diagonal and each off-diagonal of $\SSSS$ ($2L \leq W$ is implicitly used here). 

\begin{lem}
If $\SSSS_0=\s_0\s_0^\star$ satisfies $\mathcal{A}(\SSSS) = \cc$, then it is the only positive semidefinite matrix which satisfies $\mathcal{A}(\SSSS) = \cc$.
\label{dualcertlem}
\end{lem}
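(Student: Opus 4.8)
The plan is to show that the affine system $\mathcal{A}(\SSSS)=\cc$ together with $\SSSS\succeq 0$ admits $\SSSS_0=\s_0\s_0^\star$ as its only solution, by proving that the Hermitian difference $\HH:=\SSSS-\SSSS_0$ of any feasible $\SSSS$ with $\SSSS_0$ must vanish. Writing $k=\lfloor L/2\rfloor$, I would proceed in two stages: first establish that every feasible $\SSSS$ agrees with $\SSSS_0$ on its first $k+1$ rows (and, by Hermitian symmetry, its first $k+1$ columns); then upgrade this partial agreement to full equality using positive semidefiniteness together with the single main-diagonal autocorrelation measurement. This is a constructive substitute for exhibiting the dual certificate that the lemma's name alludes to: instead of producing a $\Y\in\operatorname{range}(\mathcal{A}^\star)$ with $\Y\succeq 0$ and $\ker\Y=\operatorname{span}(\s_0)$, I directly propagate the known entries.

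For the first stage, the prior information forces the top-left $(k+1)\times(k+1)$ block of any feasible $\SSSS$ to equal the rank-one block $(x_0[n]x_0^\star[m])_{0\le n,m\le k}$. Since $\SSSS\succeq 0$, the standard range condition for PSD matrices forces every column of the off-diagonal part, restricted to the first $k+1$ rows, to lie in the span of $(x_0[0],\ldots,x_0[k])^T$; hence $S[n,m]=x_0[n]\beta[m]$ for all $0\le n\le k$ and all $m$, for some scalars $\beta[m]$ with $\beta[m]=x_0^\star[m]$ whenever $m\le k$. I would then determine the remaining $\beta[m]$ from the autocorrelation constraints on the highest off-diagonals: for each $d\ge L-k=\lceil L/2\rceil$, every entry on the $d$th off-diagonal has row index $\le L-d\le k$, so the corresponding measurement reads $a_w[d,1]=\sum_{n=0}^{L-d}x_0[n]\beta[n+d]\,w[L-n]w^\star[L-(n+d)]$, a single equation whose leading unknown $\beta[d]$ carries the coefficient $x_0[0]\,w[L]\,w^\star[L-d]$, which is nonzero because both $\x_0$ and $\w$ are non-vanishing. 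Running this from $d=L$ downward yields a triangular recursion that uniquely solves for $\beta[L],\beta[L-1],\ldots,\beta[k+1]$; since $\SSSS_0$ itself is feasible and realizes $\beta[m]=x_0^\star[m]$, uniqueness of the recursion forces $\beta[m]=x_0^\star[m]$ throughout. Thus the first $k+1$ rows of $\SSSS$ coincide with those of $\SSSS_0$.

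For the second stage, I would use that $S[0,0]=|x_0[0]|^2>0$ and that row $0$ and column $0$ of $\SSSS$ now equal those of $\SSSS_0$. Taking the Schur complement of $\SSSS$ with respect to the $(0,0)$ entry shows that $\HH$, which vanishes on row and column $0$, is itself positive semidefinite. The $m=0$ autocorrelation measurement $a_w[0,1]=\sum_{n} S[n,n]\,|w[L-n]|^2$ is shared by $\SSSS$ and $\SSSS_0$, so $\sum_n \HH[n,n]\,|w[L-n]|^2=0$; since the weights are strictly positive and the diagonal entries of a PSD matrix are nonnegative, every $\HH[n,n]$ vanishes, and a PSD matrix with zero diagonal is the zero matrix. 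Hence $\HH=0$ and $\SSSS=\SSSS_0$.

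The main obstacle is the first stage, specifically ensuring the off-diagonal recursion closes using exactly the prior knowledge of $x_0[0],\ldots,x_0[\lfloor L/2\rfloor]$: one must verify that $\lceil L/2\rceil\le \lfloor L/2\rfloor+1$ so that the highest off-diagonals $d\ge\lceil L/2\rceil$ reach all the way down to $\beta[k+1]$, and that $2L\le W$ guarantees the $r=1$ window covers the full index range $[0,L]$, so that each off-diagonal measurement is indeed the complete weighted sum with nonzero weights. The range-condition step and the concluding diagonal argument are then routine once the first rows have been pinned down.
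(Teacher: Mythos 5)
Your proof is correct, but it takes a genuinely different route from the paper. The paper proves the lemma through semidefinite-programming duality: it states two sufficient conditions (injectivity of $\mathcal{A}$ restricted to the tangent space $T$ at $\s_0\s_0^\star$, and existence of a dual certificate $\D$ in the range of $\mathcal{A}^\star$ with $\D\s_0 = 0$, $\rank(\D) = L$, $\D \succcurlyeq 0$), and then constructs $\D$ explicitly from a lower-triangular Toeplitz matrix $\LL$ satisfying $\LL\s_1 + \s_2 = 0$, verifying the rank and positivity via a Schur complement. You instead give a purely primal argument: the PSD range condition forces the first $\lfloor L/2\rfloor + 1$ rows to have the rank-one form $x_0[n]\beta[m]$; the measurements on the off-diagonals $d \geq \lceil L/2 \rceil$ (whose entries all live in those rows) give a triangular recursion, with pivot coefficient $x_0[0]\,w[L]\,w^\star[L-d] \neq 0$, that pins $\beta[m] = x_0^\star[m]$; and then a Schur complement with respect to the $(0,0)$ entry shows the difference $\HH = \SSSS - \SSSS_0$ is PSD, so the single $m=0$ measurement (positive weights, nonnegative diagonal) forces $\HH = 0$. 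Both arguments hinge on the same structural facts --- $x_0[0] \neq 0$, non-vanishing window weights, and the known block covering at least half the window (your check $\lceil L/2\rceil \leq \lfloor L/2\rfloor + 1$ is the counterpart of the paper's requirement that $\s_1$ be at least as long as $\s_2$ for $\LL$ to exist). What your approach buys: it is more elementary (no KKT conditions or duality), and it proves exactly the statement of the lemma, unique \emph{feasibility}, rather than unique optimality. What the paper's approach buys: the dual-certificate template is precisely what the paper later adapts to the super-resolution regime $4L \leq M < 2W$, where the complete weighted diagonal sums are no longer available and the certificate's last row is instead built from a vector $\bl$ satisfying Fourier-domain constraints; your recursion consumes the full diagonal sums entry-by-entry and would not carry over to that regime without substantial modification, and dual certificates are also the standard starting point for a stability analysis under noise.
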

\begin{proof}
Let $T$ be the set of Hermitian matrices of the form 
$$T= \{ \SSSS = \s_0\vv^\star + \vv \s_0^\star : \vv \in \mathbb{C}^n\}$$
and $T^\perp$ be its orthogonal complement.  The set $T$ may be interpreted as the tangent space at $\s_0\s_0^\star$ to the manifold of Hermitian matrices of rank one. Influenced by \cite{candespl}, we use $\SSSS_{T}$ and $\SSSS_{T^\perp}$ to denote the projection of a matrix $\SSSS$ onto the subspaces $T$ and $T^\perp$ respectively.

Standard duality arguments in semidefinite programming show that the following are sufficient conditions for $\SSSS_0=\s_0\s_0^\star$ to be the unique optimizer of (\ref{STFTPRR}):

\begin{enumerate}[(i)]

	\item {\em Condition 1}: $\SSSS \in T \quad \textrm{and} \quad \mathcal{A}(\SSSS) = 0\Rightarrow \SSSS = 0$.  
	
	\item {\em Condition 2}: There exists a {\em dual certificate} $\D$ in the range space of $\mathcal{A}^\star$ obeying:
	\begin{itemize}
	
		\item $\D\s_0 = 0$
		\item $\rank(\D) = L$
		\item $\D \succcurlyeq 0$.
		
	\end{itemize}
	
\end{enumerate}

The proof of this result is based on KKT conditions, and can be found in any standard reference on semidefinite programming (for example, see \cite{vandenberghe}). 


We first show that {\em Condition 1} is satisfied. The set of constraints in $\mathcal{A}(\SSSS) = 0$ due to prior information fix the entries of the first $\big\lfloor{\frac{L}{2}+1\big\rfloor}$ rows and columns of $\SSSS$ to $0$. Since $\SSSS = \s_0\vv^\star + \vv\s_0^\star$ for some $\vv = (v[0],v[1], \ldots, v[L])^T$ (due to $\SSSS \in T$), we infer that $v[n] = icx_0[n]$ for $0 \leq n \leq \big\lfloor{\frac{L}{2}\big\rfloor}$, for some real constant $c$. Indeed, the equations of the form $s_0[n]v^\star[n] + v[n]s_0^\star[n] = 0$ imply $v[n] = ic_n x_0[n]$, for some real constant $c_n$. The equations $s_0[n]v^\star[m] + v[n]s_0^\star[m] = 0$ imply $c_n = c_m$.

The set of constraints in $\mathcal{A}(\SSSS) = 0$ due to the measurements corresponding to $r=1$, along with $v[n] = icx_0[n]$ for $0 \leq n \leq \big\lfloor{\frac{L}{2}\big\rfloor}$, imply $v[n] = icx_0[n]$ for  $\big\lfloor{\frac{L}{2}+ 1\big\rfloor} \leq n \leq L$. Hence, for $\SSSS \in T$, $\mathcal{A}(\SSSS) = 0$ implies $\vv = ic\s_0$, which in turn implies $\SSSS = -ic\s_0 \s_0^\star + ic\s_0 \s_0^\star  =0 $.	

We next establish {\em Condition 2}. For simplicity of notation, we consider the case where $w[n]=1$ for $0 \leq n \leq W-1$. For a general non-vanishing $\w$, the same arguments hold (the Toeplitz matrix considered is appropriately redefined with weights).

The range space of $\mathcal{A}^\star$ is the set of all $L+1 \times L+1$ matrices which are a sum of the following two matrices: The first matrix can have any value in the $\big\lfloor{\frac{L}{2}+1\big\rfloor \times \big\lfloor\frac{L}{2}+1\big\rfloor}$ submatrix corresponding to the first  $\big\lfloor{\frac{L}{2}+1\big\rfloor}$ rows and columns,  and has a value zero outside this submatrix (dual of the set of constraints due to prior information). The second matrix has a Toeplitz structure (dual of the measurements corresponding to $r=1$). 

Suppose $\s_1$ is the vector containing the first $\big\lfloor{\frac{L}{2}+1\big\rfloor}$ entries of $\s_0$ and $\s_2$ is the vector containing the remaining entries of $\s_0$. Here, $\s_1$ corresponds to the locations where we have knowledge of the entries and $\s_2$ corresponds to the locations where the entries are not determined. Let $\LL$ be a lower triangular $\big\lceil{\frac{L}{2}\big\rceil} \times \big\lfloor{\frac{L}{2}+1\big\rfloor}$ Toeplitz matrix satisfying $\LL \s_1 + \s_2 = 0$. Such an $\LL$ always exists if $s_1[0]$ is non-zero and the length of $\s_1$ is greater than or equal to the length of $\s_2$. Let $\Lambda$ be any $\big\lfloor{\frac{L}{2}+1\big\rfloor} \times \big\lfloor{\frac{L}{2}+1\big\rfloor}$  positive semidefinite matrix with rank $ \big\lfloor{\frac{L}{2}\big\rfloor}$ satisfying $\Lambda \s_1 = 0$. Again, such a $\Lambda$ always exists (any positive semidefinite matrix with eigenvectors perpendicular to $\s_1$). Consider the following dual certificate:
\begin{equation}
\D = \begin{bmatrix}
\LL^\star\LL + \Lambda & &\LL^\star\\ 
 &  &  \\
\LL &  & \mathbf{I}_{\big\lceil{\frac{L}{2}\big\rceil}}  \label{dualcertificate}
\end{bmatrix}.
\end{equation}

Clearly, $\D$ is in the range space of $\mathcal{A}^\star$. Also, $\D \s_0 = 0$ by construction. From the Schur complement, it is straightforward to see that $rank(\D) = L$ and $\D \succcurlyeq 0$. 
\end{proof}

We have shown that $\SSSS_0 = \s_0\s_0^\star$ is the only positive semidefinite matrix which satisfies the prior information and the measurements corresponding to $r=1$. Redefine $\s_0$ and $\SSSS$ such that $\s_0 = (x_0[0], x_0[1], \ldots, x_0[2L])^T$ is the $2L+1$ length subsignal of $\x$ and $\SSSS$ is the $(2L+1) \times (2L+1)$ submatrix of $\X$ corresponding to the first $2L+1$ rows and columns. 

We already have $S[n,m] = x_0[n]x_0^\star[m]$ for  $0 \leq n,m \leq L$ from above. Let $\mathcal{A}(\SSSS) = \cc$ denote these constraints, along with the affine constraints corresponding to $r=2$. Due to $2L \leq W$, Lemma \ref{dualcertlem} proves that $\SSSS_0 = \s_0\s_0^\star$ is the only psd matrix which satisfies the prior information and the measurements corresponding to $r=1,2$. Applying this argument incrementally, the entries along the diagonal and the first $W-L$ off-diagonals of every feasible matrix of (\ref{STFTPRR}) match the entries along the diagonal and the first $W-L$ off-diagonals of the matrix $\x_0\x_0^\star$.

{\bf Sparse signals}: The arguments can be seamlessly extended to incorporate sparse signals. 

(i) The fact that there exists a unique positive semidefinite completion once the diagonal and the first $W-L$ off-diagonal entries are sampled from $\x_0\x_0^\star$ holds when $\x_0$ has less than $W-L$ consecutive zeros. 

(ii) Note that the length of $\s_2$ is at most $L$, as it corresponds to the locations in the window where the entries are not determined. Since we know $x_0[n]$ for $i_0 \leq n < i_0 + L$ a priori, where $i_0$ is the smallest index such that $x_0[i_0] \neq 0$, the length of $\s_1$ is $W-L$. Redefine $\s_1$ so that it  corresponds to the locations in the window where the entries are determined, starting from the smallest index which has a non-zero value in order to ensure $s_1[0]\neq 0$. If $\x_0$ has at most $W-2L$ consecutive zeros, then the length of $\s_1$ is at least $(W-L) - (W-2L) = L$. Hence, a lower triangular Toeplitz matrix $\LL$, satisfying $\LL \s_1 + \s_2 = 0$, always exists.

${(ii) ~ 4L \leq M < 2W}$:




The range space of the dual certificate is the set of all $L+1 \times L+1$ matrices which are a sum of the following two matrices: The first matrix can have any value in the $\big\lfloor{\frac{L}{2}+1\big\rfloor \times \big\lfloor\frac{L}{2}+1\big\rfloor}$ submatrix corresponding to the first  $\big\lfloor{\frac{L}{2}+1\big\rfloor}$ rows and columns,  and has a value zero outside this submatrix (dual of the set of constraints due to prior information). The second matrix has the form $\sum_{m=1}^{M} \alpha_m \W_r^\star \f_m\f_m^\star \W_r$, where $\alpha_m$ is real-valued for each $m$ (dual of the measurements corresponding to $r=1$). 

Let ${\bl}= (l[0], l[1], \ldots, l[N-1] )^T$ be a vector that satisfies:
\begin{enumerate}[(i)]
\item $l[0] = 1$, $l[n] = l[N-n] = 0  \for  1 \leq n \leq \left\lceil \frac{L}{2} \right\rceil-1$
\item $\sum_{n=0}^{m} x_0[n]l[m-n] = \sum_{n=0}^{m} x_0^\star[n]l[N - m + n] = 0$ \for $\big\lfloor\frac{L}{2}+1\big\rfloor \leq m \leq L$
\item $\f_m^\star  {\bl} = 0  \for  M + 1 \leq m \leq N$.
\end{enumerate}

These constraints together can be written as $\mathbf{A}{\bl} = \bbb$. When $M \geq 4\lceil \frac{L}{2}\rceil$, the matrix $\mathbf{A}$ is square or wide, and almost always (pseudo) invertible. This can be seen as follows: the determinant of $\mathbf{A}$ is a polynomial function of the entries of $\x_0$, due to which it is either always zero or almost surely non-zero. By substituting $x_0[0]=1$ and $x_0[n]=0$ for $n\neq 0$, it is straightforward to check that the determinant is non-zero. Hence, such an ${\bl}$ almost always exists. 

If the last row in (\ref{dualcertificate}) is chosen as $(l[L], l[L-1], \ldots , l[0])$, then we have: (i) The lower right block is an identity matrix. (ii) $\LL\s_1 + \s_2 = 0$ is satisfied. (iii) Since $\bbb$ is a real vector, ${\bl}$ satisfies $l[n] = l^\star[N-n]$. Therefore,  ${\bl}$ is in the range space of $\sum_{m=1}^{M} \alpha_m \f_m$ where $\alpha_m$ is real-valued, due to which the resulting second matrix is in the range space of $\sum_{m=1}^{M} \alpha_m \W_r^\star \f_m\f_m^\star \W_r$. 

Therefore, $\D$ satisfies all the requirements. The arguments are applied incrementally as earlier, with $M \geq 4L$ for $r > 1$. 




\begin{IEEEbiography}[{\includegraphics[width=1in,height=1.25in]{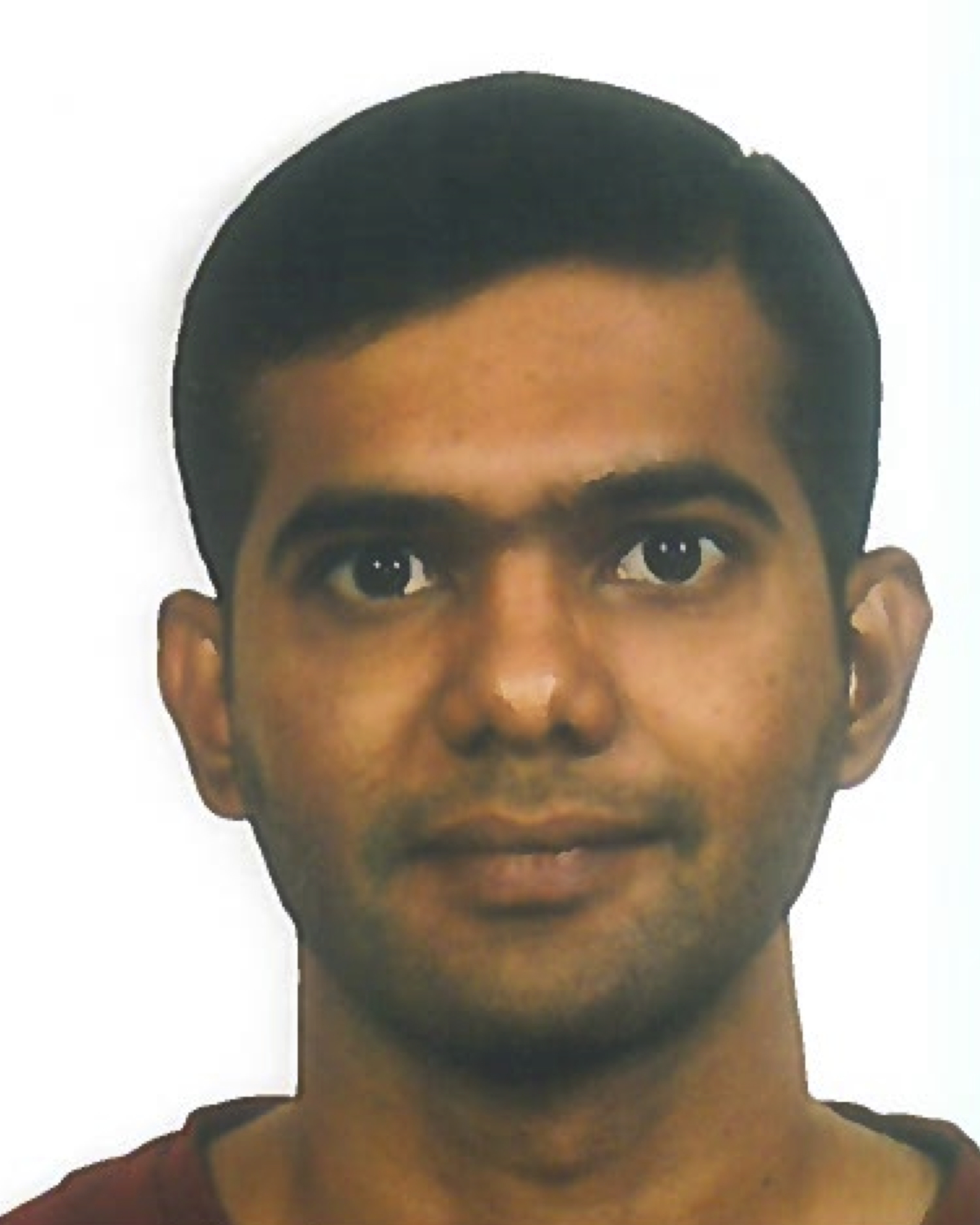}}]{Kishore Jaganathan} was born in Tamil Nadu, India, in 1989. He received the B.Tech degree from Indian Institute of Technology, Madras in 2010, and the M.S. degree from California Institute of Technology, Pasadena in 2011, both in electrical engineering. 

He has been with the California Institute of Technology since 2011, where he is currently pursuing his Ph.D. degree in electrical engineering, under the supervision of Prof. Babak Hassibi. At Caltech, he was awarded the Atwood Fellowship in 2010. He is also the recipient of the Qualcomm Innovation Fellowship, in 2014. His current research interests include signal processing, convex optimization and statistical machine learning.
\end{IEEEbiography}

\begin{IEEEbiography}[{\includegraphics[width=1in,height=1.25in]{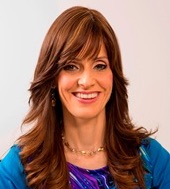}}]{Yonina C. Eldar} (S'98---M'02---SM'07---F'12) received the B.Sc. degree in Physics in 1995 and the B.Sc. degree in Electrical Engineering in 1996 both from Tel-Aviv University (TAU), Tel-Aviv, Israel, and the Ph.D. degree in Electrical Engineering and Computer Science in 2002 from the Massachusetts Institute of Technology (MIT), Cambridge. 

From January 2002 to July 2002 she was a Postdoctoral Fellow at the Digital Signal Processing Group at MIT. She is currently a Professor in the Department of Electrical Engineering at the Technion - Israel Institute of Technology, Haifa, Israel, where she holds the Edwards Chair in Engineering. She is also a Research Affiliate with the Research Laboratory of Electronics at MIT and was a Visiting Professor at Stanford University, Stanford, CA. Her research interests are in the broad areas of statistical signal processing, sampling theory and compressed sensing, optimization methods, and their applications to biology and optics.

Dr. Eldar has received numerous awards for excellence in research and teaching, including the  IEEE Signal Processing Society Technical Achievement Award (2013), the IEEE/AESS Fred Nathanson Memorial Radar Award (2014), and the IEEE Kiyo Tomiyasu Award (2016). She was a Horev Fellow of the Leaders in Science and Technology program at the Technion and an Alon Fellow. She received the Michael Bruno Memorial Award from the Rothschild Foundation, the Weizmann Prize for Exact Sciences, the Wolf Foundation Krill Prize for Excellence in Scientific Research, the Henry Taub Prize for Excellence in Research (twice), the Hershel Rich Innovation Award (three times), the Award for Women with Distinguished Contributions, the Andre and Bella Meyer Lectureship, the Career Development Chair at the Technion, the Muriel \& David Jacknow Award for Excellence in Teaching, and the Technion's Award for Excellence in Teaching (two times).  She received several best paper awards and best demo awards together with her research students and colleagues including the SIAM outstanding Paper Prize, the UFFC Outstanding Paper Award, the Signal Processing Society Best Paper Award and the IET Circuits, Devices and Systems Premium Award, and was selected as one of the 50 most influential women in Israel.

She is a member of the Young Israel Academy of Science and Humanities and the Israel Committee for Higher Education, and an IEEE Fellow. She is the Editor in Chief of Foundations and Trends in Signal Processing, a member of the IEEE Sensor Array and Multichannel Technical Committee and serves on several other IEEE committees. In the past, she was a Signal Processing Society Distinguished Lecturer, member of the IEEE Signal Processing Theory and Methods and Bio Imaging Signal Processing technical committees, and served as an associate editor for the IEEE Transactions On Signal Processing, the EURASIP Journal of Signal Processing, the SIAM Journal on Matrix Analysis and Applications, and the SIAM Journal on Imaging Sciences.  She was Co-Chair and Technical Co-Chair of several international conferences and workshops. She is author of the book {\em Sampling Theory: Beyond Bandlimited Systems} and co-author of the books {\em Compressed Sensing and Convex Optimization Methods in Signal Processing and Communications}, all published by Cambridge University Press.

\end{IEEEbiography}

\begin{IEEEbiography}[{\includegraphics[width=1in,height=1.25in]{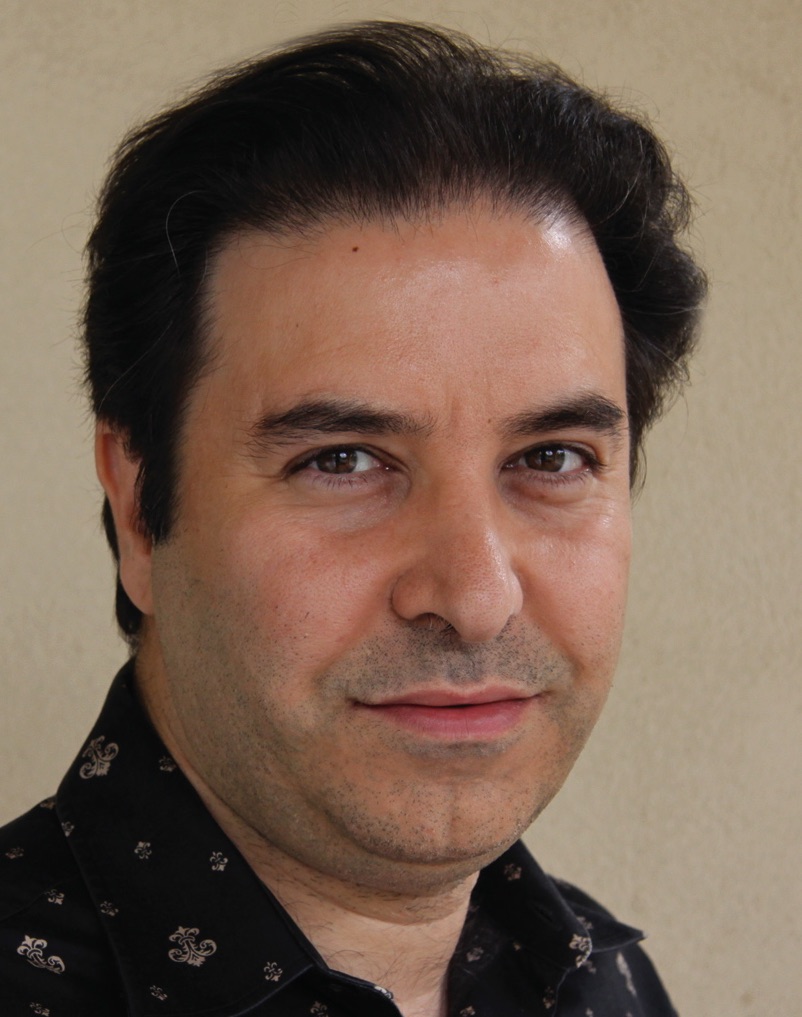}}]{Babak Hassibi} (M'08) was born in Tehran, Iran, in 1967. He received the B.S. degree from the University of Tehran in 1989, and the M.S. and Ph.D. degrees from Stanford University in 1993 and 1996, respectively, all in electrical engineering. 

He has been with the California Institute of Technology since January 2001, where he is currently the Gordon M Binder/ Amgen Professor Of Electrical Engineering. From 2008-2015 he was Executive Officer of Electrical Engineering, as well as Associate Director of Information Science and Technology. From October 1996 to October 1998 he was a research associate at the Information Systems Laboratory, Stanford University, and from November 1998 to December 2000 he was a Member of the Technical Staff in the Mathematical Sciences Research Center at Bell Laboratories, Murray Hill, NJ. His research interests include wireless communications and networks, robust estimation and control, adaptive signal processing and linear algebra. He is the coauthor of the books (both with A.H.~Sayed and T.~Kailath) {\em Indefinite Quadratic Estimation and Control: A Unified Approach to H$^2$ and H$^{\infty}$ Theories} (New York: SIAM, 1999) and {\em Linear Estimation} (Englewood Cliffs, NJ: Prentice Hall, 2000). He is a recipient of an Alborz Foundation Fellowship, the 1999 O. Hugo Schuck best paper award of the American Automatic Control Council (with H.~Hindi and S.P.~Boyd), the 2002 National Science Foundation Career Award, the 2002 Okawa Foundation Research Grant for Information and Telecommunications, the 2003 David and Lucille Packard Fellowship for Science and Engineering,  the 2003 Presidential Early Career Award for Scientists and Engineers (PECASE), and the 2009 Al-Marai Award for Innovative Research in Communications, and was a participant in the 2004 National Academy of Engineering ``Frontiers in Engineering'' program. 

He has been a Guest Editor for the IEEE Transactions on Information Theory special issue on ``space-time transmission, reception, coding and signal processing'' was an Associate Editor for Communications of the IEEE Transactions on Information Theory during 2004-2006, and is currently an Editor for the Journal ``Foundations and Trends in Information and Communication'' and for the IEEE Transactions on Network Science and Engineering. He is an IEEE Information Theory Society Distinguished Lecturer for 2016-2017.
\end{IEEEbiography}

\end{document}